\newcommand{\Ot}{\widetilde{O}}
\newtheorem{theorem}{Theorem}[section]
\newtheorem{lemma}[theorem]{Lemma}
\newtheorem{corollary}[theorem]{Corollary}
\newtheorem{claim}[theorem]{Claim}
\newtheorem{hypothesis}[theorem]{Hypothesis}
\newtheorem{observation}[theorem]{Observation}
\newtheorem{property}[theorem]{Property}
\newtheorem{problem}[theorem]{Problem}
\DeclarePairedDelimiter{\ceil}{\lceil}{\rceil}
\newcommand{\equalexpl}[1]{%
  \underset{\substack{\uparrow\\\mathrlap{\text{\hspace{-1em}#1}}}}{=}}
\newcommand{\leqexpl}[1]{%
  \underset{\substack{\uparrow\\\mathrlap{\text{\hspace{-1em}#1}}}}{\leq}}
  \newcommand{\geqexpl}[1]{%
  \underset{\substack{\uparrow\\\mathrlap{\text{\hspace{-1em}#1}}}}{\geq}}
\newcommand{\subseteqexpl}[1]{%
  \underset{\substack{\uparrow\\\mathrlap{\text{\hspace{-1em}#1}}}}{\subseteq}}
\newcommand{\inlineitem}[1][]{%
\ifnum\enit@type=\tw@
{\descriptionlabel{#1}}
\hspace{\labelsep}%
\else
\ifnum\enit@type=\z@
\refstepcounter{@listctr}\fi
\quad\quad@itemlabel\hspace{\labelsep}%
\fi}
\newcommand\cfp{\hat{c}} 
\newcommand\cspace{\cfp}
\newcommand\nocdot{\,}
\newcommand\calpha{4}
\newcommand\calphabeta{16}
\newcommand\ccalphabeta{\calphabeta \cfp \nocdot }
\title{On the Space Usage of Approximate Distance Oracles\\ with 
 Sub-2 Stretch\thanks{This work is partially supported by Israel Science Foundation (ISF) grant no. 1926/19 and  by United States - Israel Binational Science Foundation (BSF) grant no. 2018364.
Contact: \texttt{kopelot.biu@gmail.com,  korinariel10@gmail.com, 
liamr@macs.biu.ac.il}.}}
\author{Tsvi Kopelowitz, Ariel Korin, Liam Roditty \\Bar-Ilan University, Ramat Gan, Israel}
\date{}
\begin{document}

\maketitle
\thispagestyle{empty}
\setcounter{page}{0}
\begin{abstract}
For an undirected unweighted graph $G=(V,E)$ with $n$ vertices and $m$ edges, let $d(u,v)$ denote the distance from $u\in V$ to $v\in V$ in $G$.  
An $(\alpha,\beta)$-stretch approximate distance oracle (ADO) for $G$ is a data structure that given $u,v\in V$ returns in constant (or near constant) time a value $\hat d (u,v)$ such that $d(u,v) \le \hat d (u,v) \le \alpha\cdot d(u,v) + \beta$, for some reals $\alpha >1, \beta$.
If $\beta = 0$, we say that the ADO has stretch $\alpha$.

Thorup and Zwick~\cite{thorup2005approximate} showed that one cannot 
beat stretch 3 with subquadratic space (in terms of $n$) for general graphs.
P\v{a}tra\c{s}cu and Roditty~\cite{patrascu2010distance} showed that one can obtain stretch 2 using $O(m^{1/3}n^{4/3})$ space, and so if $m$ is subquadratic in $n$ then the space usage is also subquadratic. Moreover, P\v{a}tra\c{s}cu and Roditty~\cite{patrascu2010distance} showed that one cannot beat stretch 2 with subquadratic space even for  graphs where $m=\tilde{O}(n)$, based on the set-intersection hypothesis. 

In this paper, we investigate the minimum possible stretch achievable by an ADO as a function of the graph’s maximum degree, a study motivated by the question of identifying the conditions under which an ADO can be stored with subquadratic space while still ensuring a sub-2 stretch.

In particular, we show that if the maximum degree in $G$ is $\Delta_G \leq O(n^{1/k-\varepsilon})$ for some $0<\varepsilon \leq 1/k$, then there exists an ADO for $G$ that uses $\tilde{O}(n^{2-\frac {k\varepsilon}{3}})$ space and has a $(2,1-k)$-stretch. For $k=2$ this result implies a subquadratic sub-2 stretch ADO for graphs with $\Delta_G \leq O(n^{1/2-\varepsilon})$. 

We provide tight lower bounds for the upper bound under the same set intersection hypothesis, showing that if $\Delta_G = \Theta(n^{1/k})$, a $(2,1-k)$-stretch ADO requires $\tilde{\Omega}(n^2)$ space. Moreover, we show that for positive constants $\varepsilon, c > 0$, a $(2-\varepsilon, c)$-stretch ADO requires $\tilde{\Omega}(n^2)$ space even for graphs with $\Delta_G = \tilde{\Theta}(1)$.

\end{abstract}

\newpage


 
\section{Introduction}\label{sec:intro}
One of the most fundamental and classic problems in algorithmic research is the task of computing distances in graphs. 
Formally, given an undirected unweighted graph $G = (V,E)$, $|V| = n$ and $|E| = m$, the distance between two vertices $u, v \in V$, denoted $d(u,v)$, is the length of a shortest path  between $u$ and $v$.
A central problem in distance computations is the \emph{all-pairs shortest paths} (APSP) problem~\cite{dijkstra1959note, bellman1958routing, fredman1987fibonacci, cherkassky1996shortest, zwick2001exact, chan2010more, madkour2017survey} where the objective is to compute the distances between every pair of vertices in the graph. 
A main disadvantage in handling the output of the APSP problem is that  storing the distances between every pair of vertices in the graph requires $\Omega(n^2)$ space. 
As in many other problems in computer science, the lack of space efficiency in solving the APSP problem has motivated researchers to search for a tradeoff between space and accuracy. 
As a result, one central form of the APSP problem emerging from this line of research is constructing an \emph{approximate distance oracle} where we sacrifice accuracy for space efficiency.

\paragraph{Approximate Distance Oracles.}
An approximate distance oracle (ADO) is a space efficient data structure that produces distance estimations between any two vertices in the graph in constant or near-constant time.
Formally, given two vertices, $u, v \in V$, an ADO returns an estimation $\hat{d}(u,v)$ for the distance between $u$ and $v$ that satisfies: $d(u,v) \leq \hat{d}(u,v) \leq \alpha \cdot d(u,v)$, for some real $\alpha > 1$ which is called the \emph{stretch} of the ADO.
If the estimation of the ADO satisfies $d(u,v) \leq \hat{d}(u,v) \leq \max \{d(u,v), \, \alpha \cdot d(u,v)+\beta\}$ for some reals $\alpha > 1$ and $\beta$ (which can be negative), we say that the stretch of the ADO is an $(\alpha, \beta)$-stretch. 
\hfill

\paragraph{ADO for general graphs.}
ADOs were originally presented by Thorup and Zwick~\cite{thorup2005approximate} who designed a randomized algorithm that for any positive integer $k$ constructs an ADO for \emph{weighted} undirected graphs in $O(kmn^{1/k})$ time that uses $O(kn^{1+1/k})$ space and returns a $2k - 1$-stretch in $O(k)$ query time.

Thorup and Zwick~\cite{thorup2005approximate} showed that the space usage of their ADO construction for their given stretch is optimal for \emph{general graphs} based on the girth conjecture of Erd\H{o}s. 
Moreover, for stretch 3 (when $k=1$), the appropriate case of the girth conjecture is known to be true (due to complete bipartite graphs), and so the quadratic (in $n$) space lower bound for this case is unconditional.  
Notice that constructing an exact distance oracle in quadratic space is trivial. 

In the case where one allows for an additive error,  P\v{a}tra\c{s}cu and Roditty~\cite{patrascu2010distance} designed an algorithm which constructs a $(2,1)$-stretch ADO for unweighted graphs using $O(n^{5/3})$ space and $O(1)$ query time. Their result demonstrates that in such a case the multiplicative error can be reduced while still using subquadratic space.

\paragraph{ADO for sparse graphs.}
The (conditional) lower bound of Thorup and Zwick~\cite{thorup2005approximate} does not apply to sparser graphs with $m=o(n^{1+1/k})$, and indeed additional results show that it is possible to use subquadratic space and return a sub-3 stretch in such cases.
Specifically, P\v{a}tra\c{s}cu and Roditty~\cite{patrascu2010distance}, designed an algorithm that constructs a 2-stretch ADO using $O(m^{1/3} n^{4/3})$ space, and so for subquadratic $m$ the space usage is subquadratic.
P\v{a}tra\c{s}cu, Roditty and Thorup~\cite{patrascu2012new} presented additional tradeoffs for sub-3 stretch using subquadratic space for graphs where $m=\tilde O(n)$\footnote{Throughout this paper we use $\sim$ when suppressing poly-logarithmic factors in  asymptotic complexities.}.
Roddity and Tov~\cite{roditty2023approximate} improved the stretch of the ADO presented by Thorup and Zwick~\cite{thorup2005approximate} while using the same space for graphs with $m=\tilde O(n)$.

\paragraph{Conditional lower bounds and set-intersection.}
P\v{a}tra\c{s}cu and Roditty~\cite{patrascu2010distance} proved a lower bound for the space usage of sub-2 stretch ADOs (i.e., ADOs which satisfy $d(u,v) \leq \hat{d}(u,v) < 2 d(u,v)$) that holds even for sparse graphs, conditioned on the space usage for data structures that solve the following set intersection problem.
\begin{problem}\label{set intersection problem}
Let $X = \log^c N$ for a large enough constant $c$. Construct a data structure that preprocesses sets $S_1, \dots, S_N \subseteq [X]$, and answers queries of the form ``does $S_i$ intersect $S_j$?" in constant time.
\end{problem}

The lower bound proof by P\v{a}tra\c{s}cu and Roditty~\cite{patrascu2010distance} is based on the following hypothesis.
\begin{hypothesis}[\cite{patrascu2010distance,goldstein2017conditional,cohen2010hardness}]\label{set intersection hypothesis}
A data structure that solves Problem \ref{set intersection problem} requires $\tilde{\Omega}(N^2)$ space.
\end{hypothesis}

Since understanding the reduction by P\v{a}tra\c{s}cu and Roditty~\cite{patrascu2010distance} is useful for our results, we provide an overview of their reduction tailored to our needs. 
Given an instance of Problem~\ref{set intersection problem}, we construct a 3 \emph{layered} graph, where edges are only between adjacent layers, as follows. 
The first layer is $V_L =\{v_1,\ldots, v_N\}$, the second layer is $V_M = X$, and the third layer is $V_R =\{u_1,\ldots, u_N\}$.
Vertices $v_i$ and $u_i$ represent $S_i$, and so for each set $S_i$ and each $x\in S_i$, we add edges $(v_i,x)$ and $(x,u_i)$. 
Notice that the graph contains $\Theta(N)$ vertices.
It is straightforward to observe that $S_i\cap S_j\neq \emptyset$ if and only if there is a path of length 2 between $v_i$ and $u_j$. 
Moreover, since the graph is a 3 layered graph and the representatives of the sets are at the outer layers, there are no paths of length $3$ between representatives of sets.
Thus, one can solve Problem~\ref{set intersection problem} using a solution to the following problem (for $a=2$ and $b=4$).

\begin{problem}\label{def:distinguisher}
For positive integers $a$ and $b$, an \emph{$(a, b)$-distinguisher oracle} for a graph $G=(V, E)$, is a data structure that, given $u, v \in V$ establishes in constant time whether $d(u,v) \leq a$ or $d(u,v) \geq b$.  If $a < d(u,v) < b$ then the data structure can return any arbitrary answer.
\end{problem}

We conclude that a $(2,4)$-distinguisher oracle that  uses $f(n)$ space can be used to solve Problem~\ref{set intersection problem} using $f(N)$ space by applying the oracle onto the 3 layered graph.
Finally, since a sub-2 stretch ADO is a $(2,4)$-distinguisher oracle, any sub-2 stretch ADO must use at least $\Omega(n^2)$ space. 

\subsection{Main results: When can we beat stretch 2 with subquadratic space?}
The line of work by~\cite{patrascu2010distance,patrascu2012new,roditty2023approximate} is a natural research path given the observation that the (conditional) lower bounds of~\cite{thorup2005approximate} apply only  to graphs with $m=\Omega(n^{1+1/k})$.
Similarly, a natural goal, which we address in this paper, is to understand for which families of graphs can an ADO beat stretch $2$ using subquadratic space. 
In particular, the conditional lower bound proof of P\v{a}tra\c{s}cu and Roditty~\cite{patrascu2010distance} does not apply to graphs with  maximum degree of $n^{\frac 12 -\Omega(1)}$, since in such graphs the number of paths of length 2 is $n^{2-\Omega(1)}$, and so  constructing a subquadratic space $(2,4)$-distinguisher oracle is straightforward (by explicitly storing all length 2 paths). 

Thus, a natural goal, which we investigate in this paper, is to understand the relationship between the maximum degree of $G$, denoted by $\Delta_G$, and the best possible stretch obtainable for an ADO using subquadratic space. 
To address this question, we present an upper bound and matching lower bounds. The upper bound considers $\Delta_G = n^{\frac{1}{k}-\Omega(1)}$ and is summarized in the following theorem.

\begin{theorem}\label{thm:main_degree}
For any graph $G$, positive real constant $c$, and positive integer $k$ for which $\Delta_G \leq c n^{\frac{1}{k}-\varepsilon}$ for some real $0 < \varepsilon \leq 1/k$, there exists an ADO for $G$ that uses $\tilde{O}(c^k n^{2-\frac{k \varepsilon}{3}})$ space and has a $(2, 1-k)$-stretch.
\end{theorem}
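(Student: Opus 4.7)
The plan is to augment the Thorup--Zwick sampled-pivot construction with an \emph{adaptive} exact-distance table whose storage radius at each vertex $u$ is tuned to the random distance from $u$ to its nearest pivot. The bounded-degree hypothesis $\Delta_G \le cn^{1/k-\varepsilon}$ is precisely what lets this adaptive table fit within the claimed space budget.

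\medskip

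\noindent\textbf{Construction, query and stretch.} Include each vertex of $V$ in a pivot set $A$ independently with probability $p$ (to be tuned below), and store $d(a,v)$ for every $a\in A$ and every $v\in V$. For each $u\in V$, let $p(u)$ be any nearest pivot in $A$, set $\rho_u := d(u,p(u))$, and store $d(u,v)$ for every $v$ with $d(u,v)\le 2\rho_u+k-2$. On query $(u,v)$, if either of the two adaptive tables (at $u$ or at $v$) stores $d(u,v)$, return it; otherwise return $\hat d(u,v):= d(u,p(u)) + d(p(u),v)$. When we fall into the pivot branch, $d(u,v) \ge 2\rho_u + k - 1$ by construction, so the triangle inequality gives $\hat d(u,v) \le 2\rho_u + d(u,v) \le 2d(u,v)+(1-k)$, which is exactly the promised $(2,1-k)$-stretch.

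\medskip

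\noindent\textbf{Space bound.} The pivot-to-all distance table costs $|A|\cdot n=\tO(pn^2)$. For the adaptive storage, order $V\setminus\{u\}$ by nondecreasing distance from $u$ as $w_1,w_2,\ldots$, and let $j^*:=\min\{j:w_j\in A\}$, so that $\rho_u=d(u,w_{j^*})$ and $\Pr[j^*=j]=p(1-p)^{j-1}$. Using $|B_{r}(u)|\le 2\Delta_G^{\,r}$ together with the sub-multiplicative inequality $|B_{2r}(u)|\le \Delta_G^{\,r}\cdot|B_r(u)|$ (obtained by routing each $v\in B_{2r}(u)$ through the midpoint of a shortest $u$--$v$ path), and splitting the sum $\sum_j p(1-p)^{j-1}|B_{2 d(u,w_j)+k-2}(u)|$ at the threshold $j=\Theta(1/p)$, gives
\[
\mathbb{E}\bigl[|B_{2\rho_u+k-2}(u)|\bigr] \;\le\; \tO\!\left(\frac{\Delta_G^{\,k}}{p^{2}}\right).
\]
Summing over $u$, the expected adaptive storage is $\tO(n\Delta_G^{\,k}/p^2)$, and choosing $p:=(\Delta_G^{\,k}/n)^{1/3}$ balances this against $pn^2$ to give total expected space $\tO(n^{5/3}\Delta_G^{\,k/3})=\tO(c^{k/3}n^{2-k\varepsilon/3})$ by the degree hypothesis, well within the claimed bound. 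A Markov-and-resampling argument turns this expected bound into the high-probability deterministic one.

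\medskip

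\noindent\textbf{Main obstacle.} The delicate step is establishing the expected-ball bound $\mathbb{E}[|B_{2\rho_u+k-2}(u)|]=\tO(\Delta_G^{\,k}/p^2)$ \emph{uniformly} in the graph's local structure. In dense regions the balls grow like $\Delta_G^{\,r}$, so the dominant contribution to $\mathbb{E}[\Delta_G^{\,2\rho_u}]$ comes from $\rho_u\approx\log_{\Delta_G}(1/p)$; in sparse regions $\rho_u$ may be much larger but the balls are correspondingly smaller. Reconciling these two regimes in a single clean bound --- using $|B_r(u)|\le 2\Delta_G^{\,r}$ deterministically while exploiting the $(1-p)^{|B_{r-1}(u)|}$ sampling decay probabilistically, and carefully handling the boundary at $j=\Theta(1/p)$ --- is the main technical work of the proof.
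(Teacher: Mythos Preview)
Your stretch analysis is fine; the gap is entirely in the space bound. The key claim
\[
\mathbb{E}\bigl[|B_{2\rho_u+k-2}(u)|\bigr]\;=\;\tilde O\!\left(\Delta_G^{\,k}/p^{2}\right)
\]
is not true uniformly in $u$, and the informal justification you give (split at $j=\Theta(1/p)$, use $|B_{2r}(u)|\le \Delta_G^{\,r}|B_r(u)|$) cannot be made to work. The sub-multiplicative inequality introduces a factor $\Delta_G^{\rho_u}$, and the only control you have on $\rho_u$ in terms of $j^*$ is $\rho_u\le j^*$; this makes the sum behave like $\sum_j p(1-p)^{j-1}\Delta_G^{\Theta(j)}$, which blows up whenever $(1-p)\Delta_G>1$. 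Concretely, take $u$ at the end of a path of length $L=2/p$ whose other endpoint is the root of a $(\Delta_G-1)$-ary tree on $\Theta(n)$ vertices (so the maximum degree is $\Delta_G$). With probability $(1-p)^{1/p}\approx e^{-1}$ the first pivot along the BFS from $u$ has index $j^*>1/p$, hence $\rho_u>1/p$, hence $2\rho_u+k-2>L$ and the stored ball reaches into the tree; since the tree has depth $O(\log_{\Delta_G} n)=o(1/p)$ it is swallowed entirely, giving $|B_{2\rho_u+k-2}(u)|=\Theta(n)$. Thus $\mathbb{E}[|B_{2\rho_u+k-2}(u)|]=\Omega(n)$ for this $u$, whereas $\Delta_G^{\,k}/p^{2}=\Theta(n^{2/3}\Delta_G^{\,k/3})=o(n)$. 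Your intuition that ``in sparse regions $\rho_u$ may be much larger but the balls are correspondingly smaller'' is exactly what fails: sparsity near $u$ inflates $\rho_u$, but says nothing about density in the annulus between radii $\rho_u$ and $2\rho_u$.

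The paper sidesteps this by never storing a doubled-radius ball. Instead it fixes a \emph{deterministic} budget $s=\Theta(n^{1/3+2\alpha})$, takes the first $s$ BFS vertices $N(u,s)$, and stores the union of their \emph{clusters} $C(N(u,s))$; since each cluster has size $O(n/|A|)$, the per-vertex storage is $|N(u,s)|\cdot O(n/|A|)$ regardless of the graph's local geometry. The price is that the stretch argument is no longer one line: if $v\notin C(N(u,s))$ one only knows $N(u,s)\cap B(v)=\emptyset$, and converting this into the inequality $d(u,p(u))+d(v,p(v))\le d(u,v)-k+1$ requires a separate structural lemma (the ``logarithmic'' eccentricity bound $ecc(u,s_1(s_2{+}1))\ge ecc(u,s_1)+rad(s_2)$) together with $rad(s)\ge\lfloor\log_{\Delta_G}(s/2)\rfloor$. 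In short, the paper trades your easy stretch argument and hard (in fact false) space bound for an easy space bound and a nontrivial stretch argument.
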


For $k=2$, Theorem~\ref{thm:main_degree} implies a subquadratic sub-2 stretch ADO for graphs for which $\Delta_G \leq n^{\frac{1}{2}-\Omega(1)}$.

Next, we provide a tight conditional lower bound, conditioned on Hypothesis~\ref{set intersection hypothesis}, that applies when $\Delta_G = \Theta(n^{1/k})$, for all integers $k\ge 1$ \footnote{The case $k=1$ was proven by Thorup and Zwick~\cite{thorup2005approximate} to hold unconditionally. Thus, Theorem \ref{thm:main lower bound} focuses on $k \geq 2$.}. 
 The conditional lower bound is summarized  in the following Theorem.

\begin{theorem}\label{thm:main lower bound}
Let $2\le k \leq \log n$. Assuming Hypothesis \ref{set intersection hypothesis}, a $(2, 1-k)$-stretch ADO for graphs with $n$ vertices and maximum degree $\Theta(n^{1/k})$ must use $\tilde{\Omega}(n^2)$ space.
\end{theorem}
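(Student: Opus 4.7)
The plan is to establish the lower bound by reducing from Problem~\ref{set intersection problem}: assuming Hypothesis~\ref{set intersection hypothesis}, any data structure that answers set-intersection queries for $N$ sets $S_1,\dots,S_N\subseteq [X]$ with $X=\log^c N$ requires $\tilde\Omega(N^2)$ space. Given such an instance, I would build a graph $G$ on $n$ vertices with $\Delta_G=\Theta(n^{1/k})$ so that a $(2,1-k)$-stretch ADO on $G$ answers every set-intersection query; taking $n=\tilde\Theta(N)$ then yields the claimed $\tilde\Omega(n^2)$ space lower bound.

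The core construction is a $(k+1)$-layered bipartite graph with outer layers $V_0=\{v_1,\dots,v_N\}$ and $V_k=\{u_1,\dots,u_N\}$ and $k-1$ intermediate layers $V_1,\dots,V_{k-1}$, generalizing the 3-layer P\v{a}tra\c{s}cu--Roditty reduction. Each element $x\in[X]$ gets a routing substructure living in the intermediate layers, whose job is to connect every $v_i$ with $x\in S_i$ to every $u_j$ with $x\in S_j$ via a path of length exactly $k$. Since the graph is bipartite, every path from $V_0$ to $V_k$ has length congruent to $k$ modulo $2$, so disjoint pairs satisfy $d(v_i,u_j)\ge k+2$ while intersecting pairs satisfy $d(v_i,u_j)=k$. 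A $(2,1-k)$-stretch estimate then gives $\hat d(v_i,u_j)\le \max\{k,\,2k+1-k\}=k+1$ in the intersecting case and $\hat d(v_i,u_j)\ge k+2$ in the disjoint case, so thresholding at $k+1$ decides intersection.

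The main technical obstacle is designing the per-element substructures so that every intersecting pair receives a length-$k$ path, every internal vertex has degree $O(n^{1/k})$, and the total number of internal vertices is $\tilde O(N)$. My plan is to use a fan-out/fan-in routing network of depth $k-2$ with branching factor $n^{1/k}$ between the entry layer $V_1$ and the exit layer $V_{k-1}$, which gives each substructure a capacity of roughly $n^{(k-2)/k}$ sets on each side. Elements $x$ with $M_x\le n^{(k-2)/k}$ are handled by a single such substructure, while heavier elements are split into $O((M_x/n^{(k-2)/k})^2)$ virtual copies whose substructures together form a biclique cover of the $M_x\times M_x$ intersecting index pairs for $x$. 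Keeping the total vertex count at $\tilde O(N)$ under this split is the delicate step, and I expect it to require a preprocessing of the hypothesis instance (by pigeonhole or a random partitioning argument) that balances the frequencies $M_x$ so that the summed covering cost stays $\tilde O(N)$.

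Once the construction is in place, applying the $(2,1-k)$-stretch ADO to the resulting graph $G$ yields a set-intersection data structure for the original instance using the ADO's space, so Hypothesis~\ref{set intersection hypothesis} forces this space to be $\tilde\Omega(N^2)=\tilde\Omega(n^2)$, completing the proof.
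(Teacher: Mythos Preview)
Your high-level plan matches the paper's exactly: reduce from Problem~\ref{set intersection problem}, build a $(k{+}1)$-layered graph with outer layers indexed by the sets, use the parity of paths to get the $k$ vs.\ $k{+}2$ dichotomy, and observe that a $(2,1-k)$-stretch ADO is a $(k,k{+}2)$-distinguisher. That part is correct.

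The gap is in your per-element routing substructure. Your fan-out/fan-in tree plus biclique-cover scheme does not keep the vertex count at $\tilde O(N)$: a single element $x$ with $M_x=\Theta(N)$ already forces $\Theta\bigl((M_x/n^{(k-2)/k})^2\bigr)$ copies, each of size $\Theta(n^{(k-2)/k})$, for a total of $\Theta(n^{(k+2)/k})\gg n$ internal vertices. No ``balancing'' preprocessing can fix this, since the instance may genuinely have one universal element. What you are missing is that the hypothesis gives you a \emph{polylogarithmic universe}, $|X|=\log^c N$, so you can afford $\tilde O(N)$ internal vertices \emph{per element}. The paper exploits this directly: for each $x\in X$ it takes a full copy of the butterfly graph on $k{+}1$ layers of $N$ vertices each (degree $2N^{1/k}$, all-to-all length-$k$ connectivity between the outer layers), and simply deletes the outer edges at $v_i,u_i$ whenever $x\notin S_i$. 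The $|X|$ copies share the two outer layers but keep disjoint inner layers, giving $n=\tilde O(N)$ vertices and $\Delta_G=\tilde O(N^{1/k})$ with no case analysis on $M_x$ at all. Replacing your fan-in/fan-out gadget by a butterfly per element closes the gap.
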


Notice that Theorem~\ref{thm:main lower bound} implies that the upper bound of Theorem~\ref{thm:main lower bound} is optimal under Hypothesis \ref{set intersection hypothesis} in two ways. First, while a subquadratic space $(2, 1-k)$-stretch ADO is achievable for $\Delta_G \leq O(n^{\frac{1}{k}-\varepsilon})$, achieving the same stretch for graphs with $\Delta_G = \Theta(n^{1/k})$ requires quadratic space. Secondly, even a slight improvement to the additive error of the ADO in the upper bound, such as a $(2, -k)$-stretch subquadratic ADO for graphs with $\Delta_G \leq O(n^{\frac{1}{k}-\varepsilon})$, would refute Hypothesis \ref{set intersection hypothesis} according to Theorem~\ref{thm:main lower bound}, since by setting $\varepsilon = \frac{1}{k} - \frac{1}{k+1}$, we obtain a subquadratic $(2, -k)$-stretch ADO for graphs with $\Delta_G \leq O(n^{\frac{1}{k}-\varepsilon}) = O(n^{\frac{1}{k+1}})$ which contradicts Theorem~\ref{thm:main lower bound}.

The tight bounds formed by Theorems \ref{thm:main_degree} and \ref{thm:main lower bound} leave open the question of whether it is possible to achieve an ADO with a $(2 - \varepsilon, c)$ stretch and subquadratic space for positive constants $\varepsilon, c$. In the following theorem we rule out the possibility of such an ADO assuming Hypothesis \ref{set intersection hypothesis}.

\begin{theorem}\label{thm:mult lower bound}
Let $\varepsilon, c > 0$ be constants. Assuming Hypothesis \ref{set intersection hypothesis}, a $(2 - \varepsilon, c)$-stretch ADO for graphs with $n$ vertices and maximum degree $\tilde{\Theta}(1)$ must use $\tilde{\Omega}(n^2)$ space.
\end{theorem}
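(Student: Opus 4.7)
The plan is to reduce Problem~\ref{set intersection problem} to the existence of a $(2-\varepsilon,c)$-stretch ADO for bounded-degree graphs, adapting the P\v{a}tra\c{s}cu--Roditty reduction used earlier in the paper. The main challenge is that the element vertices in the original 3-layered P\v{a}tra\c{s}cu--Roditty graph can have degree $\Theta(N)$, whereas we now need maximum degree $\tilde{\Theta}(1)$; simultaneously we must preserve a large enough gap between the distances of intersecting and non-intersecting set pairs so that a $(2-\varepsilon,c)$-stretch oracle still distinguishes them. I will address both issues at once by (i) replacing each element vertex $x$ with a balanced binary tree $T_x$ whose leaves correspond bijectively to the sets containing $x$, and (ii) subdividing the ``external'' edges connecting tree leaves to the set-representative vertices $v_i, u_i$ by a factor $t_E = \Theta(\log N / \varepsilon)$.

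Given an instance $S_1,\ldots,S_N \subseteq [X]$ with $X = \log^c N$, the constructed graph $G$ contains vertices $v_i, u_i$ for each $i \in [N]$, and for each element $x \in [X]$ a balanced binary tree $T_x$ with leaves $\{\ell_{x,i} : i \in I_x\}$, where $I_x = \{i : x \in S_i\}$. Each leaf $\ell_{x,i}$ is connected to both $v_i$ and $u_i$ by two subdivided paths of length $t_E$, while internal tree edges remain unsubdivided. The set-representatives have degree $|S_i| \leq X = \tilde{O}(1)$ and every other vertex has degree at most $3$, so $\Delta_G = \tilde{O}(1)$. Since $\sum_{x} |I_x| = \sum_i |S_i| \leq NX$ and each external edge contributes $O(t_E)$ subdivision vertices, the total number of vertices is $n = \tilde{O}(NX \cdot t_E) = \tilde{O}(N)$.

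The key distance computation is as follows. If $x \in S_i \cap S_j$, then the path from $v_i$ via its subdivided external edge to $\ell_{x,i}$, through $T_x$ to $\ell_{x,j}$, and via a subdivided external edge to $u_j$ has length $2 t_E + d_{T_x}(\ell_{x,i},\ell_{x,j}) \leq 2 t_E + 2\log_2 N + O(1)$. If instead $S_i \cap S_j = \emptyset$, then any $v_i$-to-$u_j$ path must traverse at least two distinct element trees $T_x$ and $T_y$, using some intermediate set-representative $v_k$ or $u_k$ as a bridge; since each tree contributes length at least $2$ between its entry and exit leaves and there are four external subdivided paths on such a route, the total length is at least $4 t_E + 4$. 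Choosing $t_E = \lceil (2/\varepsilon)\log_2 N + c/(2\varepsilon) \rceil$ guarantees $(2-\varepsilon)(2 t_E + 2\log_2 N + O(1)) + c < 4 t_E + 4$, so from $\hat{d}(v_i,u_j)$ one can recover whether $S_i \cap S_j = \emptyset$ by comparing against a fixed threshold.

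Consequently, a $(2-\varepsilon,c)$-stretch ADO for $G$ of space $o(n^2/\mathrm{polylog}(n))$ would yield a data structure for Problem~\ref{set intersection problem} of space $o(N^2/\mathrm{polylog}(N))$ (since $n = \tilde{\Theta}(N)$), contradicting Hypothesis~\ref{set intersection hypothesis}. The main obstacle is reconciling two competing demands: replacing high-degree element vertices by trees forces tree distances that vary by $\Theta(\log N)$ (from $2$ for sibling leaves up to $2\log_2 N$ for cross-tree leaves), whereas the $(2-\varepsilon,c)$-stretch guarantee only tolerates a multiplicative distortion of $2-\varepsilon$; the subdivision of the external edges by $\Theta(\log N/\varepsilon)$ is the key device that amplifies the additive non-intersection/intersection gap beyond what the multiplicative stretch can bridge. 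Care is also needed in verifying the $4 t_E + 4$ lower bound for non-intersection paths, ruling out structural shortcuts such as routes through three or more trees, reuses of the same tree, or back-tracking via subdivision vertices.
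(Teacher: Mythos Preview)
Your approach is correct but differs from the paper's in a meaningful way. The paper does not build a fresh construction here; instead it reuses the butterfly-graph infrastructure from the proof of Lemma~\ref{lem:CLB general distinguisher}, sets $k=\log n$ so that the butterfly already has maximum degree $\tilde O(1)$, and then subdivides only the edges between the first two and last two layers by a factor $t=\lceil (k+c)/(2\varepsilon)\rceil$. In that construction the intersecting distance is \emph{exactly} $2t+k-2$ and the non-intersecting distance is at least $4t+k-2$, so the arithmetic is clean and there is essentially nothing new to verify beyond what Lemma~\ref{lem:CLB general distinguisher} already established.

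Your binary-tree replacement of element vertices is an independent, self-contained reduction that avoids the butterfly machinery entirely. The price you pay is that the intersecting distance now varies over a range of width $\Theta(\log N)$ (depending on where the two relevant leaves sit in $T_x$), which is exactly why you need $t_E=\Theta((\log N)/\varepsilon)$ to absorb that slack---in the paper's version the analogous $\Theta(\log n)$ term comes instead from the $k=\log n$ layers of the butterfly, so the final parameters match. Your lower bound $d(v_i,u_j)\ge 4t_E+4$ when $S_i\cap S_j=\emptyset$ is correct: any simple $v_i$--$u_j$ path must pass through at least one intermediate set-representative, hence at least four external subdivided segments, and the intermediate index cannot equal $i$ or $j$ (else the sets would intersect), forcing two nontrivial leaf-to-leaf tree segments of length $\ge 2$ each. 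One small point: your stated value $t_E=\lceil (2/\varepsilon)\log_2 N + c/(2\varepsilon)\rceil$ may need an additive $O(1/\varepsilon)$ cushion to absorb the unspecified $O(1)$ in the tree-depth bound, but this does not affect the argument.
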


Notice that Theorem~\ref{thm:mult lower bound} also holds for graphs with a maximum degree larger than $\tilde{\Theta}(1)$ since one could always add an isolated star component of at most $n$ vertices to a graph with $\Delta_G = \tilde{\Theta}(1)$ to make $\Delta_G$ arbitrarily large. Thus, if an ADO achieves certain bounds for graphs with a large $\Delta_G$ it could also match those bounds for graphs with $\Delta_G = \tilde{\Theta}(1)$.

\subsection{Organization}
The rest of this paper is organized as follows. 
In \Cref{sec:overview} we provide an overview of the main ideas used in this paper. 
In~\Cref{sec:additional related work} we survey some additional related work.
In~\Cref{sec:preliminaries} we provide some definitions that are used in the more technical parts of the paper. 
In~\Cref{sec:lemmas} we prove some useful lemmas that are used in the proof of~\Cref{thm:main_degree}, which is described in~\Cref{sec:A new ADO} together with the construction of our new ADO.
In~\Cref{sec: lower bound proof} we turn to the lower bounds and prove~\Cref{thm:main lower bound} and~\Cref{thm:mult lower bound}.
We conclude our work in~\Cref{sec:conclusions}. 

\subsection{Overview of Results and Techniques}\label{sec:overview}
In this section we describe an overview of the intuition and techniques used to obtain our  main results.

\subsubsection{Upper Bound: A New ADO}\label{sec:upper bound overview}

Since our new ADO is based on the ADO  of Agarwal and Godfrey~\cite{agarwal2013brief},  that has a $(2,1)$-stretch and uses $\tilde{O}(n^{5/3})$ space (which simplifies the ADO of~\cite{patrascu2010distance}), we provide an overview of the construction of their ADO.

The ADO constructed by Agarwal and Godfrey~\cite{agarwal2013brief} uses the concept of \emph{bunches} and \emph{clusters} introduced by Thorup and Zwick~\cite{thorup2001compact}. Following the conventions of Thorup and Zwick~\cite{thorup2001compact, thorup2005approximate}, for a vertex $v \in V$ and set $A \subseteq V$, let $p_A(v)$ be the vertex in $A$ which is closest to $v$ (breaking ties arbitrarily).
The \emph{bunch} $B_A(v)$ of $v$ with respect to $A$ is defined as $B_A(v) = \{ w \in V \mid d(v,w) < d(v,p_A(v)) \}$.
The \emph{cluster} $C_A(w)$ of $w$ with respect to $A$ is defined as $C_A(w) = \{ v \in V \mid d(w,v) < d(v,p_A(v)) \}$. 
We omit $A$ from the notation when it is clear from context. 
Thorup and Zwick~\cite{thorup2001compact} presented an algorithm that computes a set $A$ of size $\tilde{O}(s)$ such that $|B(v)|, |C(v)| \leq \calpha n/s$,  for every $v \in V$. 
The ADO of Agarwal and Godfrey~\cite{agarwal2013brief}  uses a hitting set $A$ of size $\tilde{O}(n^{2/3})$ such that for every $v \in V$, $|B(v)|, |C(v)| \leq O(n^{1/3})$.

Given two vertices $u,v\in V$, the ADO first tests whether $B(u)\cap B(v) \ne \emptyset$, and, if so, then the ADO returns the exact distance $d(u,v)$. 
The method for testing whether the two bunches intersect is based on the observation (which follows from the definitions of bunch and cluster) that $B(u)\cap B(v) \ne \emptyset$ if and only if $u\in C(B(v))$\footnote{For $S\subseteq V$, let $C(S)=\bigcup\limits_{u \in S}C(u)$.}.
Thus, each vertex $v\in V$ stores the exact distances to all vertices in $C(B(v))$, and now,  the case of $B(u)\cap B(v) \ne \emptyset$ costs constant time and returns an exact distance.
To deal with the case of $B(u)\cap B(v) = \emptyset$, the oracle stores the distances of pairs in  $V
\times A$, and the ADO returns the minimum of either the length of the shortest path between $u$ and $v$ passing through $p(u)$ or the length of the shortest path between $u$ and $v$  passing through $p(v)$.
The space usage is $O(n^{5/3})$ for storing $C(B(v))$ for every $v\in V$, and $\tilde O(n|A|) = \tilde O(n^{5/3})$ for storing the distances for all pairs in $V\times A$.

\paragraph{Intuition for the new ADO.}
Our new ADO construction is based on the following intuition  regarding the ADO construction of Agarwal and Godfrey~\cite{agarwal2013brief}. 
If we enlarge $B(u)$ by moving $p(u)$ to a further vertex (from $u$), then we would increase the likelihood of $B(u)\cap B(v) \ne \emptyset$, and so the ADO would return exact distances for more pairs of vertices.
However, in such a case, the quality guarantee on the stretch obtained by approximating $d(u,v)$ with  the shortest path from $u$ to $v$ that passes through $p(u)$ becomes worse. 
Part of the challenge is to balance the size of $B(u)$ which affects the usefulness of the intersections and the role of $p(u)$ when approximating the distances.

Our approach, intuitively, is to separate the definition of $p(u)$ used for the approximations and the set chosen for the intersections. 
Specifically, the definition of $p(u)$ remains unchanged relative to $A$ (we do however change the size of $A$), but instead of testing whether $B(u)\cap B(v)\neq \emptyset$, we use a larger set $N(u)$ (which contains $B(u)$), and test whether $N(u)\cap B(v)\neq \emptyset$ (or $B(u)\cap N(v)\neq \emptyset$).
Testing whether $N(u)\cap B(v)\neq \emptyset$ is implemented by storing all of the distances between $u$ and vertices in $C(N(u))$. 
We remark that one may consider the possibility of testing whether $N(u)\cap N(v)\neq \emptyset$ instead of testing whether $N(u)\cap B(v)\neq \emptyset$, however, such an approach seems to require too much space.

Recall that Agarwal and Godfrey~\cite{agarwal2013brief} obtained a $(2,1)$-stretch. 
In our algorithm, we choose $N(u)$ in such a way that when $N(u)\cap B(v) = \emptyset$ then $\min_{x \in B(u), y \in B(v)}\{ d(x, y) \} > k$ if $\Delta_G \leq n^{\frac{1}{k}-\Omega(1)}$, which ends up reducing the additive component of the stretch by at least $k$ (see \Cref{clm:stretch analysis}).
Thus, the approximation of the ADO is always at most $(2,1-k)$, which is less than stretch 2 for $k \geq 2$.

\subsubsection{Conditional Lower bound}
In Section~\ref{sec: lower bound proof} we  prove the following lemma, which directly implies Theorem~\ref{thm:main lower bound} since a $(2, 1-k)$-stretch ADO is also a $(k,k+2)$-distinguisher oracle.

\begin{lemma}\label{lem:CLB general distinguisher}
Let $2\le k \leq \log n$. Assuming Hypothesis \ref{set intersection hypothesis}, any $(k,k+2)$-distinguisher oracle for graphs with $n$ vertices and maximum degree $\Theta(n^{1/k})$ must use $\tilde{\Omega}(n^2)$ space.
\end{lemma}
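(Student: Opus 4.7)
The plan is to prove Lemma \ref{lem:CLB general distinguisher} by a reduction from Problem \ref{set intersection problem}. Given an instance with $N$ sets $S_1, \ldots, S_N \subseteq [X]$ where $X = \log^c N$, the goal is to construct a $(k+1)$-layered graph $G$ with layers $L_0, L_1, \ldots, L_k$ and distinguished families $\{v_i\}_{i \in [N]} = L_0$ and $\{u_j\}_{j \in [N]} = L_k$ such that $n := |V(G)| = \tilde{\Theta}(N)$, the maximum degree is $\tilde{\Theta}(n^{1/k})$, and $d_G(v_i, u_j) \leq k$ iff $S_i \cap S_j \neq \emptyset$ (with $d_G(v_i, u_j) \geq k+2$ otherwise). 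A $(k, k+2)$-distinguisher oracle on $G$ then solves set intersection in the same space, so Hypothesis \ref{set intersection hypothesis} yields $\tilde{\Omega}(N^2) = \tilde{\Omega}(n^2)$ as required.

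My intended construction encodes each index $i \in [N]$ as a $k$-tuple $(i_1, \ldots, i_k) \in [N^{1/k}]^k$. For every element $x \in [X]$, every middle layer index $\ell \in \{1, \ldots, k-1\}$, and every $a \in [N^{1/k}]^k$, a vertex $m_x^{(a)}$ is placed in $L_\ell$, giving $(k-1) X N = \tilde{O}(N)$ middle vertices. All edges stay within a single $x$-gadget: $v_i$ is adjacent to $m_x^{(a)} \in L_1$ whenever $x \in S_i$ and $(a_1, \ldots, a_{k-1}) = (i_1, \ldots, i_{k-1})$; $u_j$ is adjacent to $m_x^{(b)} \in L_{k-1}$ whenever $x \in S_j$ and $(b_2, \ldots, b_k) = (j_2, \ldots, j_k)$; and $m_x^{(a)} \in L_\ell$ is adjacent to $m_x^{(a')} \in L_{\ell+1}$ whenever $(a_2, \ldots, a_k) = (a'_1, \ldots, a'_{k-1})$, a de~Bruijn-style shift. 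Because the single ``free'' coordinate at each step takes $N^{1/k}$ values, every set vertex has degree $\tilde{O}(N^{1/k})$ and every middle vertex has degree at most $2 N^{1/k}$, yielding overall maximum degree $\tilde{O}(n^{1/k})$.

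For the distance characterization, if $x \in S_i \cap S_j$ then choosing $c_\ell := j_{\ell+1}$ for $\ell = 1, \ldots, k-1$ yields a unique monotone length-$k$ path $v_i \to m_x^{(i_1, \ldots, i_{k-1}, c_1)} \to m_x^{(i_2, \ldots, i_{k-1}, c_1, c_2)} \to \cdots \to m_x^{(i_{k-1}, c_1, \ldots, c_{k-1})} \to u_j$ through the gadget of $x$, so $d_G(v_i, u_j) \leq k$. Conversely, any walk of length exactly $k$ from $L_0$ to $L_k$ in a $(k+1)$-layered graph must be strictly monotone, since any backward step in a length-$k$ walk would prevent a net displacement of $k$ layers. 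Monotonicity combined with the internal disjointness of distinct gadgets confines all intermediate vertices to one gadget $x$, and the layer-$1$ and layer-$(k-1)$ edge constraints then force $x \in S_i \cap S_j$. Finally, the bipartite layered parity makes every $L_0$-to-$L_k$ distance congruent to $k$ modulo $2$, so non-intersecting pairs must satisfy $d_G(v_i, u_j) \geq k+2$.

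The principal obstacle is designing the middle-to-middle adjacency to meet three simultaneous requirements: every in-gadget pair $(v_i, u_j)$ must admit a length-$k$ monotone path; every middle vertex's degree on each side must be bounded by $O(n^{1/k})$; and no path crossing distinct gadgets may have length less than $k+2$. The de~Bruijn shift resolves all three at once: its free last coordinate supplies exactly $N^{1/k}$ successors per middle vertex, its $k-1$ accumulated free coordinates are precisely enough to steer the path's label to match $u_j$'s required suffix by layer $k-1$, and because the shift preserves the element $x$, any attempt to cross gadgets must detour through a set vertex and is automatically penalized by the $+2$ imposed by layer parity.
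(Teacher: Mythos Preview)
Your reduction is correct and follows the same overall architecture as the paper: a $(k{+}1)$-layered graph built from one gadget per universe element $x\in X$, with the first and last layers shared across gadgets and the middle layers kept disjoint, so that a monotone length-$k$ path exists iff some $x$ lies in $S_i\cap S_j$, and layered parity forces all other $v_i$--$u_j$ distances to be at least $k+2$.

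The one genuine difference is the choice of infrastructure graph inside each gadget. The paper uses the \emph{butterfly} graph: vertices in adjacent layers are joined when their base-$N^{1/k}$ labels agree in all but the $t$-th digit. You use the \emph{de~Bruijn} shift: vertices in adjacent layers are joined when one label is a one-position shift of the other. Both are standard $N^{1/k}$-regular layered graphs connecting every vertex of layer $0$ to every vertex of layer $k$ by a path of length $k$, so the two constructions are interchangeable here; neither buys anything the other does not. Your verification of the path (choosing $c_\ell=j_{\ell+1}$) and of the degree bounds is correct.

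One small omission: the lemma asks for maximum degree $\Theta(n^{1/k})$, but your set vertices have degree up to $X\cdot N^{1/k}=\tilde O(N^{1/k})$, and you only claim $\tilde\Theta(n^{1/k})$. The paper closes this gap by padding $G$ with $\tilde O(N)$ isolated vertices, which inflates $n$ by a polylogarithmic factor and thereby absorbs the $X$ factor into $n^{1/k}$; you should add the same remark.
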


Notice that it is straightforward to construct a $(k,k+2)$-distinguisher oracle for graphs with $n$ vertices and maximum degree $\Theta(n^{\frac 1k -\varepsilon})$ in $O (n^{2- k \varepsilon})$ space by storing all pairs of vertices at distance exactly $k$.
Thus, \Cref{lem:CLB general distinguisher} shows that such a construction is essentially optimal. 

\paragraph{The challenges.}
There are two issues that need to be addressed in order to extend the reduction by P\v{a}tra\c{s}cu and Roditty~\cite{patrascu2010distance} in a way that proves Lemma~\ref{lem:CLB general distinguisher}.  The first is to adjust the distances so that   $S_i\cap S_j\neq \emptyset$ if and only if $d(v_i,u_j) = k$, and otherwise, $d(v_i,u_j) \ge k+2$.
The second issue is that the degrees of vertices in $V_M$ need to be adjusted to be at most $\tilde O(N^{1/k})$.
In order to simplify our intuitive explanation, we focus our attention to the special case where $X=\{x\}$ has only one element.

One straightforward way of dealing with the first issue is to replace vertex $x\in V_M$ with a path $P_x=(w_1,\ldots,w_{k-1})$ of length $k-2$, and for each $S_i$ that contains $x$ we add edges $(v_i,w_1)$ and $(w_{k-1},u_i)$. 
Thus, the constructed graph would be a $k+1$ layered graph.
The number of vertices in such a graph is $O(k + N) = O(N)$, and the distance between vertices in the first and last layers are $k+2q$ for some integer $q\ge 0$. 
However, we still need to address the second issue of bounding the maximum degree, since $w_1$ and $w_{k-1}$ may have a very high degree corresponding to the number of sets containing $x$.

On the other hand, one initial idea (that does not work) for dealing with the second issue is to replace $x\in V_M$ (in the original 3 layered graph) with $N$ vertices $y_1,y_2,\ldots,y_N$,
and for each $S_i$ that contains $x$ we add edges $(v_i,y_i)$ and $(y_{i},u_i)$. 
Now the maximum degree of each node is constant, however, for $i\ne j$ such that $x\in S_i\cap S_j$, there is no path from $v_i$ to $u_j$. 
This idea is missing the functionality of the path $P_x$ which allows us to connect more than one pair of vertices from $V_L \times V_R$.

\paragraph{Combining approaches.}
Our reduction makes use of an underlying $k+1$ layered \emph{infrastructure graph} $\mathcal L$, commonly known as the \emph{butterfly graph} (see~\cite{p2011unifying, patrascu2012new}), which has the following three properties: 
\begin{enumerate*}
    \item each layer contains $N$ vertices, 
    \item there is a path of length $k$ from every vertex in the first layer to every vertex in the last layer, and 
    \item the degree of every vertex is at most $2N^{1/k}$.
\end{enumerate*}
The layers of $\mathcal L$ are numbered $0$ to $k$.
The vertices in each layer  are (separately) indexed with integers from $1$ to $N$, and the construction of $\mathcal L$ is based on the base $N^{1/k}$ representation of the these indices: for $1\le t \le k$, vertices from layer $t-1$ are connected with vertices from layer $t$ if and only if the base $N^{1/k}$ representation of their corresponding indices are the same, except for possibly the $t$'th digit. 
Similar to before, we denote the first layer of $\mathcal{L}$ by $V_L =\{v_1,\ldots, v_N\}$ and the last layer by $V_R =\{u_1,\ldots, u_N\}$.

Finally, we construct a $k+1$ layered graph $G_x$ which is intuitively obtained by removing from $\mathcal L$ edges touching either $v_i$ or $u_i$ for every $S_i$ that does not contain $x$.  
Thus, in $G_x$, if $x\in S_i\cap S_j$ then there is a path of length $k$ from $v_i$ to $u_j$ in $G_x$, and otherwise, there is no path from $v_i$ to $u_j$ in $G_x$.

We remark that in the general case, where $|X|$ may be larger than 1,
we combine $G_x$ for different $x\in X$ in a special way, and so we may introduce paths from $v_i$ to $u_j$ even if $S_i\cap S_j = \emptyset$. However, since the resulting graph is still a $k+1$ layered graph, and we are interested in paths between vertices in the first layer and vertices in the last layer, the lengths of such paths must be at least $k+2$.
Thus, a $(k,k+2)$-distinguisher oracle on the combined graph suffices for solving \Cref{set intersection problem}.

\paragraph{Applying the edge splitting technique.}
In order to prove Theorem~\ref{thm:mult lower bound}, which assuming Hypothesis \ref{set intersection hypothesis} eliminates the possibility of a subquadratic $(2 - \varepsilon, c)$-stretch ADO, even for graphs with $\Delta_G = \tilde{\Theta}(1)$, we use a construction which is based on the construction we use in the proof of  Theorem~\ref{thm:main lower bound}, but with two changes: 
\begin{enumerate*}
    \item we use $k = \log(n)$ so we have $\Delta_G = \tilde{O}(N^{1/k}) = \tilde{O}(1)$,
    \item we use the \emph{edge splitting} technique also used in \cite{patrascu2010distance,cohen2010hardness}, but unlike in \cite{patrascu2010distance,cohen2010hardness} we only split certain edges in the graph so the lower bound on the stretch is as high as possible.
\end{enumerate*}
See~\Cref{sec: lower bound proof} for more details.

\subsection{Additional related work}\label{sec:additional related work}

Different aspects of Thorup and Zwick~\cite{thorup2005approximate} ADOs were studied since they were introduced for the first time.
Chechik~\cite{Chechik15,Chechik14} reduced  the query time from $O(k)$ to $O(1)$,  while keeping the stretch and the space unchanged. (See also~\cite{Nilsen13,MeNa06}.)
Roditty, Thorup, and Zwick \cite{roditty2005deterministic} presented a deterministic algorithm that constructs an ADO in $\Ot(mn^{1/k})$ time while keeping the stretch and the space unchanged. 
Baswana and Kavitha \cite{baswana2006faster} presented an algorithm  with $O(n^2 \log n)$ running time\footnote{For $k=2$ the query time is $O(\log n)$. For $k>2$ the query time is $O(k)$.}. Baswana, Goyaland and Sen~\cite{BaswanaGS09}  presented an $\Ot(n^2)$ time algorithm that computes a $(2,3)$-distance oracle with $\Ot(n^{5/3})$ space. 
Sommer~\cite{Sommer16} presented an $\Ot(n^2)$ time algorithm that computes a $(2,1)$-distance oracle with $\Ot(n^{5/3})$ space. 
Akav and Roditty~\cite{akav2020almost} presented the first sub-quadratic time algorithm that constructs an ADO with stretch better than $3$. They presented an $O(n^{2-\epsilon})$-time algorithm that constructs a ADO with $O(n^{11/6})$ space and $(2+\epsilon,5)$-stretch. 
Chechik and Zhang~\cite{ChechikZ22} improved the result of Akav and Roditty~\cite{akav2020almost}. 
Among their results is an $O(m+n^{1.987})$ time algorithm that constructs an ADO with $(2,3)$-stretch and  $\Ot(n^{5/3})$ space.
Following the work by P\v{a}tra\c{s}cu and Roditty \cite{patrascu2010distance} who constructed an ADO for unweighted graphs that uses $O(n^{5/3})$ space and returns a $(2,1)$-stretch in $O(1)$ time, Abraham and Gavoille \cite{abraham2011approximate} extended the ADO by P\v{a}tra\c{s}cu and Roditty \cite{patrascu2010distance} for all even stretch values, by constructing for any integer $k \geq 2$, an ADO of size $\tilde{O}(n^{1+2/(2k-1)})$ with a $(2k - 2,1)$-stretch returned in $O(k)$ time. P\v{a}tra\c{s}cu, Roditty and Thorup \cite{patrascu2012new} focused on analyzing sparse graphs where $m = \tilde{O}(n)$ and noted that both the ADOs by Thorup and Zwick \cite{thorup2005approximate}, and the ADOs by Abraham and Gavoille \cite{abraham2011approximate} use a space complexity that can be described by the curve $S(\alpha, m) = \tilde{O}(m^{1+2/(\alpha+1)})$ where $\alpha$ is the stretch of the ADO and $m$ is the number of edges in the graph. 
P\v{a}tra\c{s}cu, Roditty and Thorup \cite{patrascu2012new} extended the curve $S(\alpha, m)$ to work for non integer stretch values $\alpha > 2$.
Although our research focuses on constant query time ADOs, another branch of research includes ADOs that have non constant query time~\cite{agarwal2011approximate, porat2013preprocess, agarwal2013distance, agarwal2014space, bilo2023improved}. 

In the lower bound regime, the problem of constructing a $(2,4)$-distinguisher oracle was analyzed from the perspective of time complexity as well. 
For graphs with degree of at most $n^{1/2}$, the problem of determining for each edge in the graph whether it is in a triangle in $O(n^{2-\varepsilon})$ time for some $\varepsilon > 0$ was shown to be hard under either the $\mathsf{3SUM}$ \cite{patrascu2010towards, kopelowitz2016higher} or $\mathsf{APSP}$ \cite{williams2020monochromatic} hypotheses. 
Since there exists a standard reduction from the problem of determining for each edge in the graph whether it is in a triangle to the problem of constructing a $(2,4)$-distinguisher oracle (see~\cite{abboud2022hardness}), a $(2,4)$-distinguisher oracle is also hard to construct in subquadratic time for graphs with degree of at most $n^{1/2}$ under either the $\mathsf{3SUM}$ or $\mathsf{APSP}$ hypotheses.

The problem of constructing a $(k,k+2)$-distinguisher oracle for a general integer $k  \geq 2$ was also studied in the past in terms of time complexity. 
Dor, Halperin and Zwick~\cite{dor2000all} showed that if all distances in an undirected $n$ vertex graph can be approximated with an additive error of at most $1$ in $O(A(n))$ time, then \emph{Boolean matrix multiplication} on matrices of size $n \times n$ can also be performed in $O(A(n))$ time. 
Dor, Halperin and Zwick~\cite{dor2000all} conclude that constructing a $(k,k+2)$-distinguisher oracle for an integer $k \geq 2$ is at least as hard as multiplying two Boolean matrices.

\section{Preliminaries}\label{sec:preliminaries}

Let $d_G(u, v)$ be the distance between vertices $u$ and $v$ in the graph $G$. The eccentricity of a vertex $v \in V$ in a graph $G$, denoted by $ecc_G(v)$, is defined as $ecc_G(v) = \max_{u \in V}\{d_G(v, u)\}$. 
The diameter of $G$ is defined as $diam_G = \max_{v \in V}\{ecc_G(v)\}$ and the radius of $G$ is defined as $rad_G = \min_{v \in V}\{ecc_G(v)\}$. 

The eccentricity of a vertex $v$ can be thought of as the distance between $v$ and the last vertex met during a \emph{Breadth First Search} (BFS) of the graph starting at $v$.
Since our goal is to construct an ADO that uses subquadratic space, we cannot afford to store a separate BFS tree for each vertex.
Instead, the construction algorithm of the ADO from Theorem~\ref{thm:main_degree} will store only a partial BFS tree for each vertex by truncating the BFS scan after some number of vertices. 
Motivated by this notion of a truncated  scan, we introduce the following generalization of eccentricity which turns out to be useful for our purposes.


Let $N_G(v, s)$ be the first $s$ vertices met during a BFS\footnote{ The  traversal order of vertices in the same layer during the BFS execution does not matter as long as the order is consistent. } starting from $v$ in the graph $G$, i.e., the $s$ closest vertices to $v$ (excluding $v$). 
If $s$ is not an integer, then let $N(v, s) = N(v, \lfloor s \rfloor)$. 
For an integer $r \geq 0$, define $L_G(v, r) = \{ u \in V \setminus\{v\} \mid d_G(u, v) = r \}$ and $T_G(v, r) = \{ u \in V \mid 0 < d_G(u, v) \leq r \}$. 
Notice that $T_G(v, r) = \bigcup\limits_{i = 1}^{r}L_G(v, i)$. 
For any real $1 \leq s \leq n - 1$, define $ecc_G(v, s)$ to be the maximum integer $k \in \left[0, ecc_G(v) \right]$ for which $T_G(v, k) \subseteq N_G(v, s)$.  
Notice that $ecc_G(v, n-1) = ecc_G(v)$. 
Define 
$rad_G(s) = \min_{v \in V}\{ecc_G(v, s)\}$. 
Notice that $rad_G(n-1) = rad_G$. 
We omit the subscript $G$ when using the definitions above whenever $G$ is clear from context.

\section{Useful Lemmas}\label{sec:lemmas}
In this section we prove several useful properties of the graph attributes defined in Section \ref{sec:preliminaries} which will be used to prove the upper bound theorem in~\Cref{sec:A new ADO}.

The following observation and corollary address the relationship between $T(v,r)$ and $N(v,s)$, and follow from the definition of BFS.

\begin{observation}\label{obs:1}
Let $G = (V,E)$  be an unweighted undirected graph, with $|V| = n$. For any $v \in V$ and integers $s$ and $r$ such that $1 \leq s < n$ and $1 \leq r \leq diam_G$, either 
\begin{enumerate*} 
\item $T(v, r) \subset N(v, s)$, 
\item $N(v, s) \subset T(v, r)$, or 
\item $N(v, s) = T(v, r)$ 
\end{enumerate*}.
\end{observation}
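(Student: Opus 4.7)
The plan is to reduce the statement to a simple case analysis based on the maximum BFS depth reached by the prefix $N(v,s)$. Let $d^{\star} = \max_{u \in N(v,s)} d_G(v,u)$. The key structural fact I would first establish is the sandwich
\[
T(v, d^{\star} - 1) \;\subseteq\; N(v,s) \;\subseteq\; T(v, d^{\star}).
\]
This follows directly from the BFS traversal order: because BFS visits vertices in non-decreasing order of distance from $v$, every vertex at distance strictly less than $d^{\star}$ from $v$ must appear in any prefix that already contains a vertex at distance $d^{\star}$, while no vertex at distance greater than $d^{\star}$ can appear in $N(v,s)$ by definition of $d^{\star}$. The assumption in the footnote that the intra-layer tie-breaking order is fixed and consistent is precisely what makes $N(v,s)$ well-defined and makes this argument go through.

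Given the sandwich, the conclusion follows by splitting on the relationship between $r$ and $d^{\star}$. If $r \leq d^{\star} - 1$, then $T(v,r) \subseteq T(v, d^{\star}-1) \subseteq N(v,s)$, which gives either case (i) or case (iii) depending on whether the containment is strict. If $r \geq d^{\star}$, then $N(v,s) \subseteq T(v, d^{\star}) \subseteq T(v,r)$, which gives either case (ii) or case (iii). In each branch, strictness versus equality is decided by a direct cardinality comparison between $|N(v,s)| = s$ and $|T(v,r)|$, so the three listed options exhaust the possibilities. I do not anticipate any real obstacle; the only subtle points are the two boundary situations where $N(v,s)$ terminates exactly at a layer (equality on the right of the sandwich) or contains all vertices strictly closer than $d^{\star}$ (equality on the left), and both follow immediately from the BFS ordering.
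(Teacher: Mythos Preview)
Your proposal is correct. The paper does not give an explicit proof of this observation; it simply states that the observation (together with its corollary) ``follow[s] from the definition of BFS,'' and your sandwich $T(v,d^{\star}-1)\subseteq N(v,s)\subseteq T(v,d^{\star})$ together with the case split on $r$ versus $d^{\star}$ is exactly the natural way to unpack that remark.
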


\begin{corollary}\label{cor:1}
Let $G=(V,E)$ be an unweighted undirected graph, with $|V| = n$.
For any $v \in V$ and integers $s$ and $r$ such that $1 \leq s < n$ and $1 \leq r \leq diam_G$, 
\begin{enumerate*} 
\item if $|T(v, r)| < |N(v,  s)|$ then $T(v, r) \subset N(v, s)$, 
\item if $|N(v,  s)| < |T(v, r)|$ then $N(v, s) \subset T(v, r)$, and 
\item if $|N(v,  s)| = |T(v, r)|$ then $N(v, s) = T(v, r)$\end{enumerate*}.
\end{corollary}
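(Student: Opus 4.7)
The plan is to deduce Corollary~\ref{cor:1} as a direct logical consequence of Observation~\ref{obs:1}, using the finiteness of $V$ so that strict set containment translates to strict inequality of cardinalities. First, I would invoke Observation~\ref{obs:1} to conclude that exactly one of the following holds: $T(v,r) \subset N(v,s)$, $N(v,s) \subset T(v,r)$, or $N(v,s) = T(v,r)$. Note that each of these three possibilities determines the relative cardinalities: a strict containment between finite sets yields a strict cardinality inequality in the same direction, while set equality yields cardinality equality. The three resulting cardinality outcomes $|T(v,r)| < |N(v,s)|$, $|N(v,s)| < |T(v,r)|$, and $|N(v,s)| = |T(v,r)|$ are pairwise mutually exclusive and jointly exhaustive.

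From this observation, each of the three implications in the corollary follows by eliminating the two inconsistent cases. For the first claim, given $|T(v,r)| < |N(v,s)|$, I would rule out $N(v,s) \subset T(v,r)$ (which would force $|N(v,s)| < |T(v,r)|$, contradicting the hypothesis) and rule out $N(v,s) = T(v,r)$ (which would force cardinality equality); hence the only remaining possibility from Observation~\ref{obs:1} is $T(v,r) \subset N(v,s)$. The second and third claims follow by the same elimination argument, with the roles of the three cases permuted according to which cardinality relation is assumed.

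The main (and essentially only) obstacle is ensuring that each strict containment genuinely yields a strict cardinality inequality; this is immediate since $V$ is finite, so both $T(v,r)$ and $N(v,s)$ are finite subsets of $V$, and proper containment between finite sets is equivalent to containment together with strictly smaller cardinality. No additional calculation or graph-theoretic argument beyond Observation~\ref{obs:1} is required.
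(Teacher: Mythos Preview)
Your proposal is correct and matches the paper's approach: the paper does not give an explicit proof of Corollary~\ref{cor:1}, treating it as an immediate consequence of Observation~\ref{obs:1}, which is precisely the derivation you spell out.
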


The following useful property addresses the relationship between $T(v,r)$ and $N(v,s)$ for the special cases where either $r=ecc(v,s)$ or $r=ecc(v,s)+1$. 

\begin{property}\label{prop:1}
Let $G=(V,E)$ be an unweighted undirected graph, with $|V| = n$. For any $v \in V$ and integer $s$ such that $1 \leq s < n$, we have: 
\begin{enumerate*} 
\item $T(v, ecc(v, s)) \subseteq N(v, s)$, and
\item  if $s < n - 1$, then $T(v, ecc(v, s) + 1) \not\subseteq N(v, s)$ 
\end{enumerate*}.\end{property}

\begin{proof} 
By definition, $ecc(v, s)$ is the largest integer $k \in \left[0, ecc(v) \right]$ for which $T(v, k) \subseteq N(v, s)$. 
Thus, 
\begin{enumerate*} 
\item $T(v, ecc(v, s)) \subseteq N(v, s)$, and 
\item if $ecc(v, s) < ecc(v)$ then $T(v, ecc(v, s) + 1) \not\subseteq N(v, s)$ 
\end{enumerate*}. 
If $s < n - 1$, it must be that $ecc(v, s) < ecc(v)$, since if we assume towards a contradiction that $ecc(v, s) = ecc(v)$ for some $s < n - 1$ then $T(v, ecc(v, s)) = T(v, ecc(v)) = V \setminus \{v\} = N(v, n-1)$ but on the other hand, by definition of $ecc(v, s)$, we have $T(v, ecc(v, s)) \subseteq N(v, s) \subset N(v, n-1)$, which is a contradiction.
\end{proof}

The following  lemma states that $ecc_G(v,s)$ exhibits a behavior that is similar to the behavior of the distance function which cannot decrease when removing edges  and vertices from $G$. 

\begin{lemma}\label{lem:1}
Let $G=(V,E)$ be an unweighted undirected graph, $V' \subseteq V$, and let $G'$ be the subgraph of $G$ induced by the vertices in $V'$. For any vertex $v \in V'$ and for any integer $s$ such that $1 \leq s < |V'|$, it holds that $ecc_G(v, s) \leq ecc_{G'}(v, s)$.
\end{lemma}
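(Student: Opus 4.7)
The plan is to set $r = ecc_G(v, s)$ and establish $T_{G'}(v, r) \subseteq N_{G'}(v, s)$; together with $r \le ecc_{G'}(v)$, the definition of $ecc_{G'}(v, s)$ as the maximum admissible integer then yields the desired inequality $ecc_{G'}(v, s) \ge r$.

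The core step is a cardinality argument. Since $G'$ is the induced subgraph of $G$ on $V'$, every $v$-to-$u$ path in $G'$ is also a $v$-to-$u$ path in $G$, so $d_G(v, u) \le d_{G'}(v, u)$ for every $u \in V'$. Hence
\[
T_{G'}(v, r) \;\subseteq\; T_G(v, r),
\]
and by \Cref{prop:1} applied inside $G$ we have $T_G(v, r) \subseteq N_G(v, s)$, which gives $|T_G(v, r)| \le s$. Chaining yields
\[
|T_{G'}(v, r)| \;\le\; |T_G(v, r)| \;\le\; s \;=\; |N_{G'}(v, s)|,
\]
so \Cref{cor:1} applied inside $G'$ with parameters $r$ and $s$ forces $T_{G'}(v, r) \subseteq N_{G'}(v, s)$.

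It remains to confirm that $r$ lies in the admissible range $[0, ecc_{G'}(v)]$ appearing in the definition of $ecc_{G'}(v, s)$; I expect this to be the main obstacle of the proof. If instead $r > ecc_{G'}(v)$, then every $v$-reachable vertex of $G'$ would sit at distance strictly less than $r$ in both $G'$ and $G$; combining this observation with the cardinality bound $|T_{G'}(v, r)| \le s$ and with the fact that in a connected $G$ there is a vertex at distance exactly $r$ from $v$ whenever $r \le ecc_G(v)$, a short counting argument forces at least one such vertex to persist inside $G'$ with its distance preserved, contradicting the assumption. Once this corner is ruled out, the inclusion $T_{G'}(v, r) \subseteq N_{G'}(v, s)$ together with $r \le ecc_{G'}(v)$ immediately delivers $ecc_{G'}(v, s) \ge r$ from the definition, completing the proof.
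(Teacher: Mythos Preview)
Your core argument—the chain $T_{G'}(v,r)\subseteq T_G(v,r)\subseteq N_G(v,s)$ giving $|T_{G'}(v,r)|\le s=|N_{G'}(v,s)|$, followed by \Cref{cor:1} inside $G'$—is exactly the paper's proof. The paper does not separately verify the range constraint $r\le ecc_{G'}(v)$ you single out; it simply writes that $r'$ is ``the largest value for which $T_{G'}(v,r')\subseteq N_{G'}(v,s)$'' and concludes directly.

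Your treatment of that corner case, however, is too loose as written. The sentence ``a short counting argument forces at least one such vertex to persist inside $G'$ with its distance preserved'' does not name the mechanism, and ``distance preserved'' points in the wrong direction—the contradiction comes precisely from the distance \emph{not} being preserved. A clean version: suppose $ecc_{G'}(v)$ is finite and $r>ecc_{G'}(v)$. Then every $u\in V'\setminus\{v\}$ satisfies $d_G(v,u)\le d_{G'}(v,u)\le ecc_{G'}(v)<r$, so $V'\setminus\{v\}\subseteq T_G(v,r-1)\subseteq T_G(v,r)$. Combined with $|T_G(v,r)|\le s\le |V'|-1$ this forces $T_G(v,r)=V'\setminus\{v\}=T_G(v,r-1)$, hence $L_G(v,r)=\emptyset$, contradicting $r\le ecc_G(v)$ whenever $v$'s component in $G$ has a vertex at distance at least $r$ (in particular whenever $ecc_G(v)$ is finite). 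If $ecc_{G'}(v)=\infty$ the constraint is vacuous. So the concern you raise is legitimate but easily closed; replace the hand-wave with these two lines and the proof is complete.
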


\begin{proof}
Given an integer $1 \leq s < |V'|$, let $r = ecc_{G}(v, s)$ and $r' = ecc_{G'}(v, s)$. We want to show that $r \leq r'$. By definition of $ecc_{G'}(v, s)$, $r' = ecc_{G'}(v, s)$ is the largest value for which $T_{G'}(v, r') \subseteq N_{G'}(v, s)$. 
Thus, in order to show that $r \leq r'$, it suffices to show that $T_{G'}(v, r) \subseteq N_{G'}(v, s)$.

For any vertex pair $u, w \in V'$, we have $d_G(u,w) \leq d_{G'}(u,w)$ since  $G'$ is a subgraph of $G$. Thus, \begin{align*}
T_{G'}(v, r) &= \{ u \in V' \mid 0 < d_{G'}(v, u) \leq r \} \\
&\subseteq \{ u \in V \mid 0 < d_{G}(v, u) \leq r \} \\
&= T_{G}(v, r) \\
&\subseteqexpl{Property \ref{prop:1}} N_{G}(v, s) .
\end{align*}
This implies that $|T_{G'}(v, r)| \leq |N_{G}(v, s)| = s = |N_{G'}(v, s)|$. 
By Corollary \ref{cor:1}, since $|T_{G'}(v, r)| \leq |N_{G'}(v, s)|$ then  $T_{G'}(v, r) \subseteq N_{G'}(v, s)$, as required. 
\end{proof}

\subsection{The Logarithmic-Like Behavior of Eccentricity} 
In the following lemma, which is an important ingredient in the analysis of our new ADO, we show that $ecc(v, s)$ satisfies a logarithmic-like behavior. Specifically,  $\log(x y) = \log x + \log y$. 
The reason for this behavior is that the number of vertices in each layer of a BFS tree expands in a similar way to an exponential function. 
For a tree-graph $G$ with minimum degree $\delta$ rooted at a vertex $v$,  for integers $0 \leq i < t < ecc(v)$ it holds that  $|L(v, t)| \geq \delta^i \cdot |L(v, t-i)|$. 
Since the number of vertices in every layer of the rooted tree grows exponentially, the eccentricity $ecc(v, s)$ grows logarithmically (in relation to $s$). 
Unlike in trees where the expansion of the number of vertices in every layer of a BFS can be analyzed using $\delta$, for general graphs, in order to achieve a lower bound for the expansion rate of the eccentricity of the vertices, we use $rad_G(s)$ instead.

\begin{lemma}\label{lem:2}
Let $G=(V,E)$ be an unweighted undirected graph, with $|V| = n$. 
For any vertex $v \in V$ and integers $s_1, s_2 \geq 1$ such that $s_1(s_2 + 1) < n - 1$, it holds that $ecc_G(v, s_1(s_2 + 1)) \geq ecc_G(v, s_1) + rad_G(s_2)$.
\end{lemma}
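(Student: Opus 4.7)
The plan is to set $r_1 = ecc_G(v, s_1)$ and $r_2 = rad_G(s_2)$, and then show that $T(v, r_1 + r_2) \subseteq N(v, s_1(s_2+1))$; by the definition of $ecc_G(v, s_1(s_2+1))$ as the largest integer with this containment property, this immediately yields the claim. To establish the containment, I will use the counting part of \Cref{cor:1}: it suffices to show that $|T(v, r_1 + r_2)| \leq s_1(s_2+1) = |N(v, s_1(s_2+1))|$.

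First I would bound the ``inner ball'': from $r_1 = ecc_G(v, s_1)$ together with \Cref{prop:1} we get $T(v, r_1) \subseteq N(v, s_1)$, hence $|T(v, r_1)| \leq s_1$, and in particular $|L(v, r_1)| \leq s_1$. Next I would bound the ``shells'' around vertices at distance $r_1$ from $v$. For every vertex $w \in V$, the definition $rad_G(s_2) = \min_{w \in V} ecc_G(w, s_2)$ gives $r_2 \leq ecc_G(w, s_2)$; applying \Cref{prop:1} at $w$ then yields
\[
T(w, r_2) \;\subseteq\; T(w, ecc_G(w, s_2)) \;\subseteq\; N(w, s_2),
\]
so $|T(w, r_2)| \leq s_2$ for \emph{every} $w \in V$.

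The key combinatorial step is to decompose $T(v, r_1 + r_2)$ as $T(v, r_1)$ together with the vertices $u$ with $r_1 < d_G(v,u) \leq r_1 + r_2$. For each such $u$, a shortest $v$-to-$u$ path in $G$ contains a vertex $w$ at distance exactly $r_1$ from $v$ along the path, and the subpath from $w$ to $u$ has length $d_G(v,u) - r_1 \in [1, r_2]$, so $u \in T(w, r_2)$ for some $w \in L(v, r_1)$. This gives the union bound
\[
T(v, r_1 + r_2) \;\subseteq\; T(v, r_1) \;\cup\; \bigcup_{w \in L(v, r_1)} T(w, r_2),
\]
and combining the two bounds above produces $|T(v, r_1 + r_2)| \leq s_1 + s_1 \cdot s_2 = s_1(s_2+1)$, as required.

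The main obstacle, and the only place where a little care is needed, is the shortest-path decomposition step justifying that every vertex at distance between $r_1 + 1$ and $r_1 + r_2$ from $v$ lies in $T(w, r_2)$ for some $w \in L(v, r_1)$; once that is in place the remainder is a straightforward counting argument via \Cref{prop:1} and \Cref{cor:1}. One minor technicality to check is the hypothesis $s_1(s_2+1) < n - 1$, which is needed to invoke the definition of $ecc_G(v, s_1(s_2+1))$ and to ensure $N(v, s_1(s_2+1))$ has size exactly $s_1(s_2+1)$ in the appeal to \Cref{cor:1}.
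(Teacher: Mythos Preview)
Your argument is correct and in fact more direct than the paper's. Both proofs arrive at the same counting bound $|T(v,r_1+r_2)|\le s_1+s_1s_2$ and finish via \Cref{cor:1}, but they cover the outer shell $\{u:r_1<d(v,u)\le r_1+r_2\}$ differently. The paper works inside a BFS tree rooted at $v$: it partitions the vertices beyond level $r_1$ among the descendant sets $V_i$ of the vertices $u_i\in L(v,r_1)$, passes to the induced subgraphs $G_i$, and covers the shell by the sets $N_{G_i}(u_i,s_2)$; to convert the local quantity $ecc_{G_i}(u_i,s_2)$ back into $rad_G(s_2)$ it then invokes \Cref{lem:1} (monotonicity of $ecc$ under induced subgraphs). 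You bypass all of this by observing that $rad_G(s_2)\le ecc_G(w,s_2)$ for \emph{every} $w$, so \Cref{prop:1} already gives $|T_G(w,r_2)|\le s_2$ directly in $G$, and a single shortest-path step places each shell vertex in some $T_G(w,r_2)$ with $w\in L(v,r_1)$. Your route is strictly more elementary: it needs neither the BFS-tree partition nor \Cref{lem:1}. The only small thing to make explicit when writing it out is the edge case ensuring $r_1+r_2\le ecc_G(v)$ (needed to conclude $ecc_G(v,s_1(s_2+1))\ge r_1+r_2$ from the containment), but this follows immediately since your count gives $|T(v,r_1+r_2)|\le s_1(s_2+1)<n-1$, whereas $r_1+r_2\ge ecc_G(v)$ would force $|T(v,r_1+r_2)|=n-1$.
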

\begin{proof} 
Assume towards a contradiction that $ecc_G(v, s_1(s_2 + 1)) < ecc_G(v, s_1) + rad_G(s_2)$. Thus, $ecc_G(v, s_1(s_2 + 1)) + 1 \leq ecc_G(v, s_1) + rad_G(s_2)$. 

Let $T_{BFS}$ be a BFS tree rooted at $v$ in graph $G$. 
Let $\ell = |L(v, ecc_G(v, s_1))|$  and let $u_1, u_2, \dots , u_{\ell}$ be the vertices in $L(v, ecc_G(v, s_1))$. 
For any $i$, where $1 \leq i \leq \ell$, let $V_i$ be the set of descendant of $u_i$ in $T_{BFS}$ and let $G_i$ be the graph induced by $V_i$ in $G$\footnote{It is important to note that for all scans referenced in this proof, which include a BFS procedure of $G$ starting at $v$ and BFS procedures of $G_i$ starting at $u_i$ for $1 \leq i \leq \ell$, we require a consistent order of scanning, i.e., that for a given $1 \leq i \leq \ell$, and vertices $x, x' \in V_i \subseteq V$, if $x$ is scanned before $x'$ in $G$ then $x$ should also be scanned before $x'$ in $G_i$ (and vice versa). 
This is a valid requirement since for any vertices $y, y' \in V_i \subseteq V$, $d_G(v, y) \leq d_G(v, y')$ if and only if $d_{G_i}(u_i, y) \leq d_{G_i}(u_i, y')$.}. 

Let $\mu = \min_{i \in [1, \ell]}\{ecc_{G_i}(u_i, s_2)\}$. 
We will show that: \begin{align}\{ u \in V \mid ecc_G(v, s_1) < d_G(v, u) \leq ecc_G(v, s_1) +  \mu\} \subseteq \bigcup\limits_{i = 1}^{\ell}N_{G_i}(u_i, s_2). \label{eq:00}\end{align} 
Let $w \in \{ u \in V \mid ecc_G(v, s_1) < d_G(v, u) \leq ecc_G(v, s_1) +  \mu\}$. 
By the definition of BFS, since $d_G(v, w) > ecc_G(v, s_1)$, $w$ must be a descendant of some vertex $u_j$, and so $w \in V_j$.
Since $T_{BFS}$ is a shortest path tree rooted at $v$ and since $w \in V_j$, it must be that $d_G(v, w) = d_G(v, u_j) + d_G(u_j, w) = ecc_G(v, s_1) + d_G(u_j, w)$. 
By definition of $w$, $d_G(v, w) \leq ecc_G(v, s_1) +  \mu$. Thus, $ecc_G(v, s_1) + d_G(u_j, w) \leq ecc_G(v, s_1) +  \mu$ and so $d_G(u_j, w) \leq \mu$.

By definition, $T_{G_j}(u_j, ecc_{G_j}(u_j, s_2)) = \{ x \mid x \in V_j \, \wedge \, 0 < d_{G_j}(u_j, x) \leq ecc_{G_j}(u_j, s_2)\}$. 
Since $T_{BFS}$ is a shortest path tree, for any $y \in V_j$ it must be that $d_{G_j}(u_j, y) = d_{G}(u_j, y)$. Thus, $T_{G_j}(u_j, ecc_{G_j}(u_j, s_2)) = \{ x \mid x \in V_j \, \wedge \, 0 < d_{G}(u_j, x) \leq ecc_{G_j}(u_j, s_2)\}$. 

Since $w \in V_j$, and since $d_G(u_j, w) \leq \mu \leq ecc_{G_j}(u_j, s_2)$, then $w \in T_{G_j}(u_j, ecc_{G_j}(u_j, s_2))$. By Property \ref{prop:1}, $T_{G_j}(u_j, ecc_{G_j}(u_j, s_2)) \subseteq N_{G_j}(u_j, s_2)$. It follows that every vertex in $\{ u \in V \mid ecc_G(v, s_1) < d_G(v, u) \leq ecc_G(v, s_1) +  \mu\}$ must be included in $N_{G_i}(u_i, s_2)$ for some $i$, thus confirming Equation \eqref{eq:00}. 

By Property \ref{prop:1}, $\{ u \in V \mid 0 < d_G(v, u) \leq ecc_G(v, s_1) \} = T(v, ecc_G(v, s_1)) \subseteq N(v, s_1)$. Combining with Equation \eqref{eq:00} we have that $\{ u \in V \mid 0 < d_G(v, u) \leq ecc_G(v, s_1) \} \cup \{ u \in V \mid ecc_G(v, s_1) < d_G(v, u) \leq ecc_G(v, s_1) +  \mu\} \subseteq N(v, s_1) \cup \left( \bigcup\limits_{i = 1}^{\ell}N_{G_i}(u_i, s_2) \right)$, and so:
\begin{align*} \{ u \in V \mid 0 < d(v, u) \leq ecc_G(v, s_1) + \mu \} = T(v,  ecc_G(v, s_1) + \mu) \subseteq N(v, s_1) \cup \left( \bigcup\limits_{i = 1}^{\ell}N_{G_i}(u_i, s_2) \right).\end{align*} 

Now, 
\begin{align*}ecc_G(v, s_1) + \mu &\equalexpl{definition of $\mu$} ecc_G(v, s_1) + \min_{i \in [1, \ell]}\{ecc_{G_i}(u_i, s_2)\} \\
&\geqexpl{by Lemma \ref{lem:1}} ecc_G(v, s_1) + \min_{i \in [1, \ell]}\{ecc_{G}(u_i, s_2)\} \\
&\geq ecc_G(v, s_1) + \min_{u \in V}\{ecc_G(u, s_2)\} \\
&\equalexpl{definition of $rad_G(s_2)$} ecc_G(v, s_1) + rad_G(s_2) \\
&\geqexpl{assumption} ecc_G(v, s_1(s_2 + 1)) + 1. \end{align*}

Thus, $T(v, ecc_G(v, s_1(s_2 + 1)) + 1) \subseteq T(v, ecc_G(v, s_1) + \mu) \subseteq N(v, s_1) \cup \left( \bigcup\limits_{i = 1}^{\ell}N_{G_i}(u_i, s_2) \right)$. 

Notice that $\ell \leq s_1$, since, by Property \ref{prop:1}, $\ell = |L(v, ecc_G(v, s_1))| \leq |T(v, ecc_G(v, s_1))| \le N(v, s_1) = s_1$. 
Therefore,
\begin{align*}
|T(v, ecc_G(v, s_1(s_2 + 1)) + 1)| &\leq |N(v, s_1) \cup \left( \bigcup\limits_{i = 1}^{\ell}N_{G_i}(u_i, s_2) \right)| \\ 
&\leq |N(v, s_1)| + \sum\limits_{i=1}^{\ell}|N_{G_i}(u_i, s_2)| \\ 
&\leq s_1 + \ell \cdot s_2 \\
&\leq s_1 + s_1 \cdot s_2 \\
&= s_1(1 + s_2) \\
&= |N(v, s_1(s_2 + 1))|. 
\end{align*}
By Corollary \ref{cor:1}, it follows that $T(v, ecc_G(v, s_1(s_2 + 1)) + 1) \subseteq N(v, s_1(s_2 + 1))$, which contradicts Property \ref{prop:1}.
\end{proof}

\section{Upper Bounds}\label{sec:A new ADO}

In this section we prove Theorem~\ref{thm:main_degree} by introducing a new ADO which uses subquadratic space and produces a $(2, 1-k)$-stretch for graphs for which $\Delta_G \leq O(n^{1/k-\varepsilon})$ for a positive integer $k$ and real constant $0 < \varepsilon \leq 1/k$.
The ADO is parameterized by a parameter $0 \leq \alpha < 1/3$ which quantifies the tradeoff between the space and the stretch of the ADO. 
When $\alpha = 0$ the ADO is very similar to the ADO of Agarwal and Godfrey~\cite{agarwal2013brief} which uses $\tilde{O}(n^{5/3})$ space and has a $(2,1)$-stretch. For $0 < \alpha < 1/3$, the ADO uses additional space and is able to improve the stretch of the ADO for the family of graphs for which $\Delta_G \leq O(n^{3 \alpha / k})$ .

\subsection{The Construction Algorithm} 

The description of our construction algorithm follows the notations and definitions described in Section~\ref{sec:upper bound overview}.
The construction begins with an algorithm of Thorup and Zwick~\cite{thorup2001compact} that computes a set $A$ of size $\tilde{O}(s)$ such that $|B(v)|, |C(v)| \leq \calpha n/s$,  for every $v \in V$. 
In our case we set $s = n^{2/3 + \alpha}$, thus $|A|=\tilde{O}(n^{2/3 + \alpha})$ and $|B(v)|, |C(v)| \leq \calpha  n^{1/3-\alpha}$, for every $v \in V$. 

For every vertex $v \in V$, the ADO explicitly stores the distances between $v$ and every vertex in $C(N(v, \ccalphabeta  n^{1/3 + 2\alpha}))$ for some constant $\cfp$ to be decided later. In addition, for every vertex $v \in V$ the ADO stores $p(v)$, $d(v, p(v))$ and the distances between $v$ and every vertex in $A$. 

A distance query between vertices $u$ and $v$ is answered as follows. If one of the following conditions holds 
\begin{enumerate*} 
\item $u \in A$ or $v \in A$, 
\item $u \in C(N(v, \ccalphabeta  n^{1/3 + 2\alpha}))$  or $v \in C(N(u, \ccalphabeta  n^{1/3 + 2\alpha}))$
\end{enumerate*}, then the exact distance is returned. Otherwise, the ADO returns $\hat{d}(u,v) = \min\{ d(u,p(u)) + d(p(u), v), \, d(u,p(v)) + d(p(v), v) \}$. Notice that the query time is constant.

In Claim \ref{clm:new ADO, space}, we show that the space complexity of the ADO is $\tilde{O}(\cfp n^{5/3+\alpha})$ and in Claim \ref{clm:stretch analysis}, we show that the ADO satisfies a $(2, 1- rad(\cspace n^{3\alpha}))$-stretch.

\begin{claim}\label{clm:new ADO, space}
The space complexity of the ADO is $\tilde{O}(\cfp n^{5/3+\alpha})$.
\end{claim}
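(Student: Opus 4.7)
The plan is to enumerate all the data the ADO stores and bound the space usage of each component separately. There are three pieces to account for: (i) for every $v \in V$, the pointer $p(v)$ together with the scalar $d(v,p(v))$, (ii) for every $v \in V$, the table of distances from $v$ to every vertex in $A$, and (iii) for every $v \in V$, the table of distances from $v$ to every vertex in $C(N(v, \ccalphabeta n^{1/3+2\alpha}))$. Components (i) and (ii) are straightforward: (i) contributes $O(n)$ space, and (ii) contributes $n \cdot |A| = \tilde{O}(n \cdot n^{2/3+\alpha}) = \tilde{O}(n^{5/3+\alpha})$ space because the Thorup--Zwick hitting set chosen at the start of the construction satisfies $|A| = \tilde{O}(n^{2/3+\alpha})$.

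The main work is to bound component (iii). First I would observe that by the definition of $N(v, \cdot)$, the set $N(v, \ccalphabeta n^{1/3+2\alpha})$ has size at most $\ccalphabeta n^{1/3+2\alpha}$. Next, I would unfold the definition of $C(\cdot)$ extended to sets, namely
\[
C\bigl(N(v, \ccalphabeta n^{1/3+2\alpha})\bigr) \;=\; \bigcup_{u \in N(v, \ccalphabeta n^{1/3+2\alpha})} C(u),
\]
and invoke the Thorup--Zwick guarantee that $|C(u)| \le \calpha n^{1/3-\alpha}$ for every $u \in V$. A union bound then gives
\[
\bigl|C\bigl(N(v, \ccalphabeta n^{1/3+2\alpha})\bigr)\bigr| \;\le\; \ccalphabeta n^{1/3+2\alpha} \cdot \calpha n^{1/3-\alpha} \;=\; O(\cfp \, n^{2/3+\alpha}),
\]
per vertex $v$. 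Summing this over all $n$ vertices yields a contribution of $O(\cfp n^{5/3+\alpha})$.

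Finally, I would combine the three bounds to conclude that the total space is $O(n) + \tilde{O}(n^{5/3+\alpha}) + O(\cfp n^{5/3+\alpha}) = \tilde{O}(\cfp n^{5/3+\alpha})$, as claimed. I do not anticipate any real obstacle here: the proof is essentially a routine accounting, and the only subtle point is remembering that $\ccalpha$ and $\ccalphabeta$ already carry a factor of $\cfp$ inside them (via the macro definitions), so the $\cfp$ in the statement arises naturally from the cluster-size bound rather than from a separate computation.
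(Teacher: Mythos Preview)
Your proposal is correct and follows essentially the same approach as the paper: the paper's proof likewise bounds the $p(v)$, $d(v,p(v))$, and $v$-to-$A$ distance tables by $\tilde{O}(n^{5/3+\alpha})$, and then bounds the $C(N(v,\ccalphabeta n^{1/3+2\alpha}))$ tables via $n \cdot |C(u)| \cdot |N(v,\cdot)| \le n \cdot O(n^{1/3-\alpha}) \cdot \ccalphabeta n^{1/3+2\alpha} = \tilde{O}(\cfp n^{5/3+\alpha})$. Your write-up is slightly more explicit about the union-bound step, but the argument is identical.
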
 

\begin{proof}
Storing $p(v)$, $d(v, p(v))$ and the distances between $v$ and every vertex in $A$, for all vertices $v \in V$, uses $\tilde{O}(n \cdot n^{2/3+\alpha}) = \tilde{O}(n^{5/3 + \alpha})$ space. As mentioned in the construction phase, $|B(v)|, |C(v)| \leq O(n^{1/3-\alpha})$ for every $v \in V$. 
Thus, storing the distances between every vertex $v$ and $C(N(v, \ccalphabeta  n^{1/3 + 2\alpha}))$ requires $O(n \cdot n^{1/3-\alpha} \cdot \ccalphabeta n^{1/3+2\alpha}) = \tilde{O}(\cfp n^{5/3 + \alpha})$ space as well, leading to an overall space complexity of $\tilde{O}(\cfp n^{5/3 + \alpha})$. \end{proof}

\begin{claim}\label{clm:stretch analysis}
The distance estimation $\hat{d}(u,v)$ returned by the ADO satisfies $d(u,v) \leq \hat{d}(u,v) \leq 2 d(u,v) + 1 - rad(\cspace n^{3\alpha})$.
\end{claim}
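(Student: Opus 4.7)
The plan is to verify $d(u,v) \le \hat d(u,v)$ trivially (the two exact-distance cases return $d(u,v)$ itself, and in the approximation case the two candidates are upper bounds on $d(u,v)$ by the triangle inequality) and then concentrate on the upper bound. The upper bound is also trivial in the first two cases since the oracle returns $d(u,v)$ exactly, so the real work lies in the third case of the query, i.e., when $u,v \notin A$, $u \notin C(N(v,s))$, and $v \notin C(N(u,s))$ with $s = \ccalphabeta n^{1/3+2\alpha}$. The first step will be to rewrite the two emptiness conditions using the standard identity $u \in C(x) \iff x \in B(u)$, obtaining $N(v,s) \cap B(u) = \emptyset$ and, symmetrically, $N(u,s) \cap B(v) = \emptyset$.

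Assuming without loss of generality that $d(u,p(u)) \le d(v,p(v))$, we will pick the vertex $w$ on a fixed shortest $u$--$v$ path at distance $d(u,p(u))-1$ from $u$; such a $w$ lies in $B(u)$ by the definition of $p(u)$. A small boundary case must be handled first: if $d(u,p(u)) > d(u,v)$ then the vertex of the shortest path adjacent to $v$ would already be in $B(u) \cap N(v,1) \subseteq B(u) \cap N(v,s)$, contradicting emptiness; hence $d(u,p(u)) \le d(u,v)$ and the choice of $w$ is valid. Since $w \in B(u)$ is disjoint from $N(v,s)$, Property~\ref{prop:1} forces $w \notin T(v, ecc(v,s))$, so $d(v,w) \ge ecc(v,s) + 1$, and because $w$ lies on the shortest path we also have $d(v,w) = d(u,v) - d(u,p(u)) + 1$.

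The heart of the argument, and the step we expect to be the main technical obstacle, is converting the inequality $d(v,w) \ge ecc(v,s)+1$ into a bound that involves $d(v,p(v))$ and $rad(\cspace n^{3\alpha})$, namely $ecc(v,s) \ge d(v,p(v)) - 1 + rad(\cspace n^{3\alpha})$. The plan is to apply Lemma~\ref{lem:2} with the calibrated choice $s_1 = \calpha n^{1/3-\alpha}$ and $s_2 = \cspace n^{3\alpha}$. Since $|B(v)| \le \calpha n^{1/3-\alpha}$ and $T(v,d(v,p(v))-1) \subseteq B(v)\setminus\{v\}$, Corollary~\ref{cor:1} will yield $ecc(v,s_1) \ge d(v,p(v)) - 1$; straightforward arithmetic will verify $s_1(s_2+1) \le s$ as well as $s_1(s_2+1) < n-1$ (here the hypothesis $\alpha < 1/3$ is used), so Lemma~\ref{lem:2} gives $ecc(v,s) \ge ecc(v, s_1(s_2+1)) \ge ecc(v,s_1) + rad(s_2) \ge d(v,p(v)) - 1 + rad(\cspace n^{3\alpha})$, as desired.

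Combining the pieces will yield $d(u,p(u)) + d(v,p(v)) \le d(u,v) + 1 - rad(\cspace n^{3\alpha})$. The WLOG assumption then gives $2d(u,p(u)) \le d(u,p(u)) + d(v,p(v))$, and the triangle inequality finishes the argument:
\[
\hat d(u,v) \le d(u,p(u)) + d(p(u),v) \le 2d(u,p(u)) + d(u,v) \le 2d(u,v) + 1 - rad(\cspace n^{3\alpha}).
\]
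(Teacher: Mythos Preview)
Your argument is essentially the paper's own proof with the roles of $u$ and $v$ swapped: you use $N(v,s)\cap B(u)=\emptyset$ and apply Lemma~\ref{lem:2} at $v$, whereas the paper uses $N(u,s)\cap B(v)=\emptyset$ and applies Lemma~\ref{lem:2} at $u$, but both routes land on the key inequality $d(u,p(u))+d(v,p(v))\le d(u,v)+1-rad(\hat c\, n^{3\alpha})$ and finish via the triangle inequality. One small slip in your boundary case: the path-neighbor $v'$ of $v$ need not lie in $N(v,1)$ (which is a \emph{single} vertex chosen by the BFS tie-breaking), but the fix is immediate---under your WLOG assumption $d(v,p(v))\ge d(u,p(u))>d(u,v)\ge 1$, so every neighbor of $v$ lies in $B(v)\setminus\{v\}$, hence $|T(v,1)|\le s_1\le s$ and $v'\in T(v,1)\subseteq N(v,s)$, giving the desired contradiction with $B(u)\cap N(v,s)=\emptyset$.
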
 \begin{proof}
Notice that $d(u,v) \leq \hat{d}(u,v)$ since the ADO always returns a length of some path in the graph between $u$ and $v$. It is left to show that $\hat{d}(u,v) \leq 2 d(u,v) + 1 - rad(\cspace n^{3\alpha})$. 

If the exact distance is stored in the ADO then $\hat{d}(u,v) = d(u,v)$ and the claim follows. 
Consider the case that the exact distance is not stored. This implies that $u, v\notin A$ and $v \notin C(N(u, \ccalphabeta  n^{1/3 + 2\alpha}))$. 
Assume towards a contradiction that $N(u, \ccalphabeta  n^{1/3 + 2\alpha}) \cap B(v)  \neq \emptyset$ and let $w$ be a vertex such that $w \in N(u, \ccalphabeta  n^{1/3 + 2\alpha}) \cap B(v) $. 
From the definitions of bunch and cluster, we have that $w \in B(v)$ if and only if  $v \in C(w)$. 
Thus, $v \in C(w)$, and since $w \in N(u, \ccalphabeta  n^{1/3 + 2\alpha})$, it must be that $v \in C(N(u, \ccalphabeta  n^{1/3 + 2\alpha}))$ which is a contradiction.
Thus, we have that $N(u, \ccalphabeta  n^{1/3 + 2\alpha}) \cap B(v)  = \emptyset$.

\tikzset{every picture/.style={line width=0.75pt}}

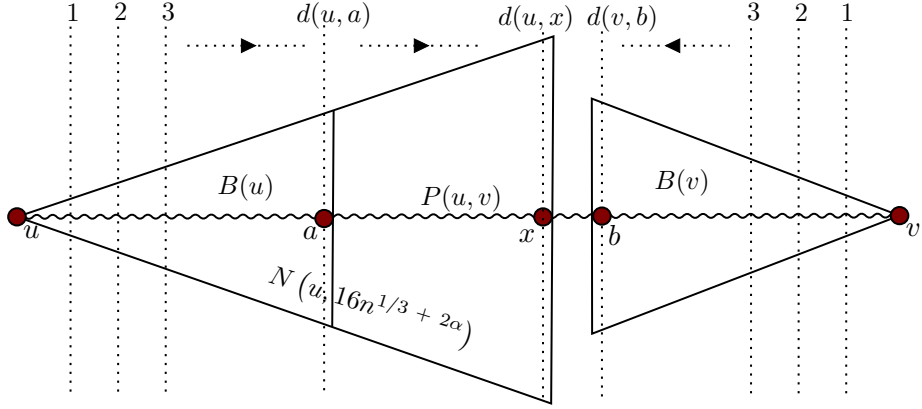
\begin{figure}[t]
\tikzset{every picture/.style={line width=0.75pt}} 

\begin{center}

\tikzset{every picture/.style={line width=0.75pt}} 

\begin{tikzpicture}[x=0.75pt,y=0.75pt,yscale=-0.8,xscale=0.8]

\draw [line width=0.75]    (48.6,149.98) .. controls (50.27,148.31) and (51.93,148.3) .. (53.6,149.97) .. controls (55.27,151.64) and (56.93,151.64) .. (58.6,149.97) .. controls (60.27,148.3) and (61.93,148.3) .. (63.6,149.96) .. controls (65.27,151.62) and (66.93,151.62) .. (68.6,149.95) .. controls (70.27,148.28) and (71.93,148.28) .. (73.6,149.94) .. controls (75.27,151.61) and (76.93,151.61) .. (78.6,149.94) .. controls (80.27,148.27) and (81.93,148.27) .. (83.6,149.93) .. controls (85.27,151.59) and (86.93,151.59) .. (88.6,149.92) .. controls (90.27,148.25) and (91.93,148.25) .. (93.6,149.92) .. controls (95.27,151.58) and (96.93,151.58) .. (98.6,149.91) .. controls (100.27,148.24) and (101.93,148.24) .. (103.6,149.9) .. controls (105.27,151.57) and (106.93,151.57) .. (108.6,149.9) .. controls (110.27,148.23) and (111.93,148.23) .. (113.6,149.89) .. controls (115.27,151.55) and (116.93,151.55) .. (118.6,149.88) .. controls (120.27,148.21) and (121.93,148.21) .. (123.6,149.87) .. controls (125.27,151.54) and (126.93,151.54) .. (128.6,149.87) .. controls (130.27,148.2) and (131.93,148.2) .. (133.6,149.86) .. controls (135.27,151.52) and (136.93,151.52) .. (138.6,149.85) .. controls (140.27,148.18) and (141.93,148.18) .. (143.6,149.85) .. controls (145.27,151.51) and (146.93,151.51) .. (148.6,149.84) .. controls (150.27,148.17) and (151.93,148.17) .. (153.6,149.83) .. controls (155.27,151.5) and (156.93,151.5) .. (158.6,149.83) .. controls (160.27,148.16) and (161.93,148.16) .. (163.6,149.82) .. controls (165.27,151.48) and (166.93,151.48) .. (168.6,149.81) .. controls (170.27,148.14) and (171.93,148.14) .. (173.6,149.81) .. controls (175.27,151.47) and (176.93,151.47) .. (178.6,149.8) .. controls (180.27,148.13) and (181.93,148.13) .. (183.6,149.79) .. controls (185.27,151.45) and (186.93,151.45) .. (188.6,149.78) .. controls (190.27,148.11) and (191.93,148.11) .. (193.6,149.78) .. controls (195.27,151.44) and (196.93,151.44) .. (198.6,149.77) .. controls (200.27,148.1) and (201.93,148.1) .. (203.6,149.76) .. controls (205.27,151.43) and (206.93,151.43) .. (208.6,149.76) .. controls (210.27,148.09) and (211.93,148.09) .. (213.6,149.75) .. controls (215.27,151.41) and (216.93,151.41) .. (218.6,149.74) .. controls (220.27,148.07) and (221.93,148.07) .. (223.6,149.74) .. controls (225.27,151.4) and (226.93,151.4) .. (228.6,149.73) .. controls (230.27,148.06) and (231.93,148.06) .. (233.6,149.72) .. controls (235.27,151.38) and (236.93,151.38) .. (238.6,149.71) .. controls (240.27,148.04) and (241.93,148.04) .. (243.6,149.71) .. controls (245.27,151.37) and (246.93,151.37) .. (248.6,149.7) .. controls (250.27,148.03) and (251.93,148.03) .. (253.6,149.69) .. controls (255.27,151.36) and (256.93,151.36) .. (258.6,149.69) .. controls (260.27,148.02) and (261.93,148.02) .. (263.6,149.68) .. controls (265.27,151.34) and (266.93,151.34) .. (268.6,149.67) .. controls (270.27,148) and (271.93,148) .. (273.6,149.67) .. controls (275.27,151.33) and (276.93,151.33) .. (278.6,149.66) .. controls (280.27,147.99) and (281.93,147.99) .. (283.6,149.65) .. controls (285.27,151.32) and (286.93,151.32) .. (288.6,149.65) .. controls (290.27,147.98) and (291.93,147.98) .. (293.6,149.64) .. controls (295.27,151.3) and (296.93,151.3) .. (298.6,149.63) .. controls (300.27,147.96) and (301.93,147.96) .. (303.6,149.62) .. controls (305.27,151.29) and (306.93,151.29) .. (308.6,149.62) .. controls (310.27,147.95) and (311.93,147.95) .. (313.6,149.61) .. controls (315.27,151.27) and (316.93,151.27) .. (318.6,149.6) .. controls (320.27,147.93) and (321.93,147.93) .. (323.6,149.6) .. controls (325.27,151.26) and (326.93,151.26) .. (328.6,149.59) .. controls (330.27,147.92) and (331.93,147.92) .. (333.6,149.58) .. controls (335.27,151.25) and (336.93,151.25) .. (338.6,149.58) .. controls (340.27,147.91) and (341.93,147.91) .. (343.6,149.57) .. controls (345.27,151.23) and (346.93,151.23) .. (348.6,149.56) .. controls (350.27,147.89) and (351.93,147.89) .. (353.6,149.55) .. controls (355.27,151.22) and (356.93,151.22) .. (358.6,149.55) .. controls (360.27,147.88) and (361.93,147.88) .. (363.6,149.54) .. controls (365.27,151.2) and (366.93,151.2) .. (368.6,149.53) .. controls (370.27,147.86) and (371.93,147.86) .. (373.6,149.53) .. controls (375.27,151.19) and (376.93,151.19) .. (378.6,149.52) .. controls (380.27,147.85) and (381.93,147.85) .. (383.6,149.51) .. controls (385.27,151.18) and (386.93,151.18) .. (388.6,149.51) .. controls (390.27,147.84) and (391.93,147.84) .. (393.6,149.5) .. controls (395.27,151.16) and (396.93,151.16) .. (398.6,149.49) .. controls (400.27,147.82) and (401.93,147.82) .. (403.6,149.49) .. controls (405.27,151.15) and (406.93,151.15) .. (408.6,149.48) .. controls (410.27,147.81) and (411.93,147.81) .. (413.6,149.47) .. controls (415.27,151.13) and (416.93,151.13) .. (418.6,149.46) .. controls (420.27,147.79) and (421.93,147.79) .. (423.6,149.46) .. controls (425.27,151.12) and (426.93,151.12) .. (428.6,149.45) .. controls (430.27,147.78) and (431.93,147.78) .. (433.6,149.44) .. controls (435.27,151.11) and (436.93,151.11) .. (438.6,149.44) .. controls (440.27,147.77) and (441.93,147.77) .. (443.6,149.43) .. controls (445.27,151.09) and (446.93,151.09) .. (448.6,149.42) .. controls (450.27,147.75) and (451.93,147.75) .. (453.6,149.42) .. controls (455.27,151.08) and (456.93,151.08) .. (458.6,149.41) .. controls (460.27,147.74) and (461.93,147.74) .. (463.6,149.4) .. controls (465.27,151.07) and (466.93,151.07) .. (468.6,149.4) .. controls (470.27,147.73) and (471.93,147.73) .. (473.6,149.39) .. controls (475.27,151.05) and (476.93,151.05) .. (478.6,149.38) .. controls (480.27,147.71) and (481.93,147.71) .. (483.6,149.37) .. controls (485.27,151.04) and (486.93,151.04) .. (488.6,149.37) .. controls (490.27,147.7) and (491.93,147.7) .. (493.6,149.36) .. controls (495.27,151.02) and (496.93,151.02) .. (498.6,149.35) .. controls (500.27,147.68) and (501.93,147.68) .. (503.6,149.35) .. controls (505.27,151.01) and (506.93,151.01) .. (508.6,149.34) .. controls (510.27,147.67) and (511.93,147.67) .. (513.6,149.33) .. controls (515.27,151) and (516.93,151) .. (518.6,149.33) .. controls (520.27,147.66) and (521.93,147.66) .. (523.6,149.32) .. controls (525.27,150.98) and (526.93,150.98) .. (528.6,149.31) .. controls (530.27,147.64) and (531.93,147.64) .. (533.6,149.3) .. controls (535.27,150.97) and (536.93,150.97) .. (538.6,149.3) .. controls (540.27,147.63) and (541.93,147.63) .. (543.6,149.29) .. controls (545.27,150.95) and (546.93,150.95) .. (548.6,149.28) .. controls (550.27,147.61) and (551.93,147.61) .. (553.6,149.28) .. controls (555.27,150.94) and (556.93,150.94) .. (558.6,149.27) .. controls (560.27,147.6) and (561.93,147.6) .. (563.6,149.26) .. controls (565.27,150.93) and (566.93,150.93) .. (568.6,149.26) .. controls (570.27,147.59) and (571.93,147.59) .. (573.6,149.25) .. controls (575.27,150.91) and (576.93,150.91) .. (578.6,149.24) .. controls (580.27,147.57) and (581.93,147.57) .. (583.6,149.24) .. controls (585.27,150.9) and (586.93,150.9) .. (588.6,149.23) .. controls (590.27,147.56) and (591.93,147.56) .. (593.6,149.22) .. controls (595.27,150.88) and (596.93,150.88) .. (598.6,149.21) .. controls (600.27,147.54) and (601.93,147.54) .. (603.6,149.21) -- (604.99,149.21) -- (604.99,149.21) ;
\draw   (48.6,149.98) -- (386.8,36.1) -- (385.35,267.75) -- cycle ;
\draw   (604.99,149.21) -- (411.17,223.7) -- (410.97,75.38) -- cycle ;
\draw  [fill={rgb, 255:red, 130; green, 0; blue, 0 }  ,fill opacity=1 ] (43.04,149.98) .. controls (43.04,146.97) and (45.53,144.53) .. (48.6,144.53) .. controls (51.67,144.53) and (54.16,146.97) .. (54.16,149.98) .. controls (54.16,152.99) and (51.67,155.43) .. (48.6,155.43) .. controls (45.53,155.43) and (43.04,152.99) .. (43.04,149.98) -- cycle ;
\draw    (248.22,82.74) -- (247.33,219.33) ;
\draw [line width=0.75]  [dash pattern={on 0.84pt off 2.51pt}]  (82.36,29.95) -- (82.36,262.95) ;
\draw [line width=0.75]  [dash pattern={on 0.84pt off 2.51pt}]  (112.36,29.95) -- (112.36,262.95) ;
\draw [line width=0.75]  [dash pattern={on 0.84pt off 2.51pt}]  (142.36,29.95) -- (142.36,262.95) ;
\draw [line width=0.75]  [dash pattern={on 0.84pt off 2.51pt}]  (511.36,31) -- (511.36,264) ;
\draw [line width=0.75]  [dash pattern={on 0.84pt off 2.51pt}]  (541.36,31) -- (541.36,264) ;
\draw [line width=0.75]  [dash pattern={on 0.84pt off 2.51pt}]  (571.36,31) -- (571.36,264) ;
\draw [line width=0.75]  [dash pattern={on 0.84pt off 2.51pt}]  (242.22,29.09) -- (242.22,262.09) ;
\draw [line width=0.75]  [dash pattern={on 0.84pt off 2.51pt}]  (417.36,33) -- (417.36,266) ;
\draw  [fill={rgb, 255:red, 130; green, 0; blue, 0 }  ,fill opacity=1 ] (236.66,151.04) .. controls (236.66,148.03) and (239.15,145.59) .. (242.22,145.59) .. controls (245.29,145.59) and (247.78,148.03) .. (247.78,151.04) .. controls (247.78,154.05) and (245.29,156.49) .. (242.22,156.49) .. controls (239.15,156.49) and (236.66,154.05) .. (236.66,151.04) -- cycle ;
\draw  [fill={rgb, 255:red, 130; green, 0; blue, 0 }  ,fill opacity=1 ] (599.43,149.21) .. controls (599.43,146.2) and (601.92,143.76) .. (604.99,143.76) .. controls (608.06,143.76) and (610.55,146.2) .. (610.55,149.21) .. controls (610.55,152.22) and (608.06,154.65) .. (604.99,154.65) .. controls (601.92,154.65) and (599.43,152.22) .. (599.43,149.21) -- cycle ;
\draw  [fill={rgb, 255:red, 130; green, 0; blue, 0 }  ,fill opacity=1 ] (411.8,149.5) .. controls (411.8,146.49) and (414.29,144.05) .. (417.36,144.05) .. controls (420.43,144.05) and (422.92,146.49) .. (422.92,149.5) .. controls (422.92,152.51) and (420.43,154.95) .. (417.36,154.95) .. controls (414.29,154.95) and (411.8,152.51) .. (411.8,149.5) -- cycle ;
\draw  [fill={rgb, 255:red, 130; green, 0; blue, 0 }  ,fill opacity=1 ] (374.66,150.04) .. controls (374.66,147.03) and (377.15,144.59) .. (380.22,144.59) .. controls (383.29,144.59) and (385.78,147.03) .. (385.78,150.04) .. controls (385.78,153.05) and (383.29,155.49) .. (380.22,155.49) .. controls (377.15,155.49) and (374.66,153.05) .. (374.66,150.04) -- cycle ;
\draw [line width=0.75]  [dash pattern={on 0.84pt off 2.51pt}]  (380.22,33.54) -- (380.22,266.54) ;
\draw [line width=0.75]  [dash pattern={on 0.84pt off 2.51pt}]  (430,42) -- (498,42) ;
\draw [shift={(457.5,42)}, rotate = 0] [fill={rgb, 255:red, 0; green, 0; blue, 0 }  ][line width=0.08]  [draw opacity=0] (8.93,-4.29) -- (0,0) -- (8.93,4.29) -- cycle    ;
\draw [line width=0.75]  [dash pattern={on 0.84pt off 2.51pt}]  (156,42) -- (232,42) ;
\draw [shift={(199,42)}, rotate = 180] [fill={rgb, 255:red, 0; green, 0; blue, 0 }  ][line width=0.08]  [draw opacity=0] (8.93,-4.29) -- (0,0) -- (8.93,4.29) -- cycle    ;
\draw [line width=0.75]  [dash pattern={on 0.84pt off 2.51pt}]  (265,42) -- (341,42) ;
\draw [shift={(308,42)}, rotate = 180] [fill={rgb, 255:red, 0; green, 0; blue, 0 }  ][line width=0.08]  [draw opacity=0] (8.93,-4.29) -- (0,0) -- (8.93,4.29) -- cycle    ;

\draw (227.02,154.28) node [anchor=north west][inner sep=0.75pt]    {$a$};
\draw (419.36,152.9) node [anchor=north west][inner sep=0.75pt]    {$b$};
\draw (50.6,153.38) node [anchor=north west][inner sep=0.75pt]  [font=\normalsize]  {$u$};
\draw (606.99,152.61) node [anchor=north west][inner sep=0.75pt]  [font=\normalsize]  {$v$};
\draw (173.04,121.38) node [anchor=north west][inner sep=0.75pt]  [font=\small]  {$B( u)$};
\draw (448.57,117.78) node [anchor=north west][inner sep=0.75pt]  [font=\small]  {$B( v)$};
\draw (209.69,174.5) node [anchor=north west][inner sep=0.75pt]  [font=\small,rotate=-17.08]  {$N\left( u,16n^{1/3\ +\ 2\alpha }\right)$};
\draw (78,13.81) node [anchor=north west][inner sep=0.75pt]  [font=\small]  {$1$};
\draw (223,12.4) node [anchor=north west][inner sep=0.75pt]  [font=\small]  {$d( u,a)$};
\draw (406,14.4) node [anchor=north west][inner sep=0.75pt]  [font=\small]  {$d( v,b)$};
\draw (108,13.4) node [anchor=north west][inner sep=0.75pt]  [font=\small]  {$2$};
\draw (138,13.4) node [anchor=north west][inner sep=0.75pt]  [font=\small]  {$3$};
\draw (507,13.4) node [anchor=north west][inner sep=0.75pt]  [font=\small]  {$3$};
\draw (537,13.99) node [anchor=north west][inner sep=0.75pt]  [font=\small]  {$2$};
\draw (566,13.99) node [anchor=north west][inner sep=0.75pt]  [font=\small]  {$1$};
\draw (363.02,156.28) node [anchor=north west][inner sep=0.75pt]    {$x$};
\draw (350,14.4) node [anchor=north west][inner sep=0.75pt]  [font=\small]  {$d( u,x)$};
\draw (301,128.4) node [anchor=north west][inner sep=0.75pt]  [font=\small]  {${\displaystyle P( u,v)}$};

\end{tikzpicture}

\end{center}

\captionof{figure}{A query for $u, v \in V$ in the case that $N(u, \ccalphabeta  n^{1/3 + 2\alpha}) \cap B(v)  = \emptyset$. Since $x \in N(u, \ccalphabeta  n^{1/3 + 2\alpha})$, $b \in B(v)$ and $x$ and $b$ are both on a shortest path between $u$ and $v$, it must be that $d(u,x) \leq d(u,b) - 1$.
}
\vspace{0.2cm}
\label{fig:ADO_main}
\end{figure}

Let $P(u,v)$ be a shortest path between $u$ and $v$. Let $a$ be the furthest vertex from $u$ in $B(u) \cap P(u,v)$, let $x$ be the furthest vertex from $u$ in $N(u, \ccalphabeta  n^{1/3 + 2\alpha}) \cap P(u,v)$ and let $b$ be the furthest vertex from $v$ in $B(v) \cap P(u,v)$ (see Figure~\ref{fig:ADO_main}). 
Notice that, by definition of $x$ and $ecc(v, s)$, if $T(u, d(u, x)) \subseteq N(u, \ccalphabeta  n^{1/3 + 2\alpha})$ then $d(u,x) = ecc(u, \ccalphabeta  n^{1/3 + 2\alpha})$  and if $T(u, d(u, x)) \not\subseteq N(u, \ccalphabeta  n^{1/3 + 2\alpha})$ then $d(u,x) = ecc(u, \ccalphabeta  n^{1/3 + 2\alpha}) + 1$. Thus, we get that $d(u,x) \geq ecc(u, \ccalphabeta  n^{1/3 + 2\alpha})$.

Since $N(u, \ccalphabeta  n^{1/3 + 2\alpha}) \cap B(v)  = \emptyset$, $x \in N(u, \ccalphabeta  n^{1/3 + 2\alpha})$, $b \in B(v)$ and $x$ and $b$ are both on a shortest path between $u$ and $v$, it must be that $d(u,x) \leq d(u,b) - 1$. Since $b$ is on a shortest path between $u$ and $v$, it holds that $d(u,b) = d(u,v) - d(v,b)$, and so $d(u,x) \leq d(u,v) - d(v,b) - 1$. Since $d(u,x) \geq ecc(u, \ccalphabeta  n^{1/3 + 2\alpha})$, it follows that: 
\begin{align}ecc(u, \ccalphabeta  n^{1/3 + 2\alpha}) \leq d(u,v) - d(v,b) - 1.\label{eq:2}\end{align} 
By Lemma \ref{lem:2}, $ecc(u, \ceil*{\calpha n^{1/3 - \alpha}}) + rad(\ceil*{\cspace n^{3 \alpha}}) \leq ecc(u, (\calpha n^{1/3 - \alpha}+1)(\cspace n^{3 \alpha}+2))$. Since $a \in B(u)$ and $|B(u)| \leq \calpha n^{1/3-\alpha}$, it follows from the definitions of $ecc(v, s)$ and bunch that $d(u, a) \leq ecc(u, \calpha n^{1/3 - \alpha})$. 
We have that:
\begin{align*}
d(u, a) + rad(\cspace n^{3 \alpha}) &\leq ecc(u, \calpha n^{1/3 - \alpha}) + rad(\cspace n^{3 \alpha})\\
&\leq ecc(u, \ceil*{\calpha n^{1/3 - \alpha}}) + rad(\ceil*{\cspace n^{3 \alpha}})\\
& \leq ecc(u, (\calpha n^{1/3 - \alpha}+1)(\cspace n^{3 \alpha}+2)) \\
&\leq ecc(u, \ccalphabeta  n^{1/3 + 2\alpha})\\
&\leqexpl{Equation \eqref{eq:2}} d(u,v) - d(v,b) - 1.
\end{align*}

Thus, $d(u, a) + d(b, v) \leq d(u,v) - rad(\cspace n^{3 \alpha}) - 1$. It follows that: \begin{align}
2 \nocdot \min\{d(u,a), d(b,v)\} \leq d(u,v) - rad(\cspace n^{3\alpha}) - 1.\label{eq:3}\end{align} Notice that by the definitions of bunch, $a$ and $p(u)$, it holds that $d(u,p(u)) = d(u,a) + 1$. Similarly, $d(v,p(v)) = d(v,b) + 1$. Thus: 
\begin{align}\hat{d}(u,v) &\leq \min\{d(u,p(u)) + d(p(u), v), d(u,p(v)) + d(p(v), v)\} \notag\\
&\leqexpl{triangle inequallity} \min\{2d(u,p(u)) + d(u,v), 2d(v,p(v)) + d(u,v)\} \notag\\
&\equalexpl{$d(u,p(u)) = d(u,a) + 1$ and $d(v,p(v)) = d(v,b) + 1$} \min\{2(d(u,a)+1)+d(u,v), 2(d(v,b)+1) + d(u,v)\} \notag\\
&\leq 2\min\{d(u,a), d(v,b)\} +2+d(u,v) \label{eq:4}\\
&\leqexpl{Equation \eqref{eq:3}} d(u,v) - rad(\cspace n^{3\alpha}) + 1 + d(u,v) \notag\\
&= 2d(u,v) + 1 - rad(\cspace n^{3\alpha}). \label{eq:5} \end{align} 
\end{proof}

By combining our ADO construction with Claims~\ref{clm:new ADO, space} and~\ref{clm:stretch analysis} we have proven the following lemma.
\begin{lemma}\label{tho:main_1}
For any graph $G$ with $n$ vertices, real $0 \leq \alpha < \frac{1}{3}$ and constant $\cfp \geq 1$, it is possible to construct an ADO that uses $\tilde{O}(\cfp n^{\frac{5}{3} + \alpha})$ space and has a $(2, 1- rad(\cspace n^{3\alpha}))$-stretch.
\end{lemma}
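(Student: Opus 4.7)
The plan is to combine the construction presented at the start of this section with the two claims already proved just before the lemma statement. Since Claim~\ref{clm:new ADO, space} supplies the space bound and Claim~\ref{clm:stretch analysis} supplies the stretch bound, the argument reduces to verifying that the constructed oracle is well-defined and invoking these two claims.

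First I would recall the ingredients of the construction. Apply the hitting-set algorithm of Thorup and Zwick with $s = n^{2/3+\alpha}$ to obtain a set $A$ of size $\tilde{O}(n^{2/3+\alpha})$ such that $|B(v)|, |C(v)| \leq \calpha n^{1/3-\alpha}$ for every $v \in V$. The ADO then stores, for every $v \in V$: the pivot $p(v)$ together with $d(v, p(v))$; the distances from $v$ to every vertex in $A$; and the exact distances from $v$ to every vertex in $C(N(v, \ccalphabeta n^{1/3 + 2\alpha}))$. A query $(u,v)$ returns the exact distance if $u \in A$, $v \in A$, $u \in C(N(v, \ccalphabeta n^{1/3 + 2\alpha}))$ or $v \in C(N(u, \ccalphabeta n^{1/3 + 2\alpha}))$; otherwise it returns $\min\{d(u,p(u)) + d(p(u),v), d(u,p(v)) + d(p(v),v)\}$.

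The space analysis, which Claim~\ref{clm:new ADO, space} carries out, amounts to multiplying $n$ by the sizes $|A| = \tilde{O}(n^{2/3+\alpha})$ and $|C(N(v, \ccalphabeta n^{1/3+2\alpha}))| \leq O(n^{1/3-\alpha}) \cdot \ccalphabeta n^{1/3+2\alpha}$, yielding $\tilde{O}(\cfp n^{5/3+\alpha})$ in total. The stretch analysis, which Claim~\ref{clm:stretch analysis} carries out, goes through the non-trivial case: when no exact distance is stored, the bunch--cluster duality forces $N(u, \ccalphabeta n^{1/3+2\alpha}) \cap B(v) = \emptyset$. Applying Lemma~\ref{lem:2} with $s_1 = \calpha n^{1/3-\alpha}$ and $s_2 = \cspace n^{3\alpha}$ then gives $ecc(u, \calpha n^{1/3-\alpha}) + rad(\cspace n^{3\alpha}) \leq ecc(u, \ccalphabeta n^{1/3+2\alpha})$, and combining with the fact that the boundary vertex $x \in N(u, \ccalphabeta n^{1/3+2\alpha})$ on a shortest path to $v$ satisfies $d(u,x) \leq d(u,v) - d(v, p(v))$ produces the claimed additive savings of $rad(\cspace n^{3\alpha})$ over the standard $(2,1)$-stretch bound.

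The main obstacle is purely contained in Claim~\ref{clm:stretch analysis}, specifically the step that converts the emptiness $N(u, \ccalphabeta n^{1/3+2\alpha}) \cap B(v) = \emptyset$ into a quantitative additive improvement, which is precisely where Lemma~\ref{lem:2} enters. Once that step is in hand, the rest of the proof of the lemma is a straightforward triangle-inequality calculation followed by tallying the space used by the three stored tables.
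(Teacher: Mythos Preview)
Your proposal is correct and follows exactly the paper's approach: the lemma is stated immediately after Claims~\ref{clm:new ADO, space} and~\ref{clm:stretch analysis}, and the paper's ``proof'' is the single sentence that combining the construction with these two claims yields the lemma. Your sketch simply expands this with a recap of the construction and the contents of the two claims, which is entirely in line with what the paper does.
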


\subsection{Proof of Main Upper Bound Theorem}

The following lemma connects $\Delta_G$ and $rad(s)$, which is the last ingredient needed for proving Theorem~\ref{thm:main_degree}.

\begin{lemma}\label{lem:5}
Let $G=(V, E)$ be an unweighted undirected graph, with $|V| = n$. For any real $s$ such that $1 \leq s < n$, it holds that $rad_G(s) \geq \lfloor \log_{\Delta_G} (s/2) \rfloor$.
\end{lemma}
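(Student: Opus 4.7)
}
Since $rad_G(s) = \min_{v \in V} ecc_G(v,s)$, it suffices to prove that
$ecc_G(v,s) \ge \lfloor \log_{\Delta_G}(s/2) \rfloor$ for an arbitrary vertex $v \in V$. Set $r = \lfloor \log_{\Delta_G}(s/2) \rfloor$. By the definition of $ecc_G(v,s)$ as the largest integer $k$ for which $T(v,k) \subseteq N(v,s)$, it suffices to exhibit the containment $T(v,r) \subseteq N(v,s)$. My plan is to reduce this containment to a pure cardinality comparison via Corollary~\ref{cor:1}, and then bound $|T(v,r)|$ by a direct BFS-layer counting argument.

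The first step is the counting argument. By a straightforward induction on the BFS layers rooted at $v$, each vertex of layer $L(v,i-1)$ has at most $\Delta_G$ neighbors, so $|L(v,i)| \le \Delta_G \cdot |L(v,i-1)|$, giving $|L(v,i)| \le \Delta_G^i$ for every $i \ge 1$. Summing,
\[
|T(v,r)| \;=\; \sum_{i=1}^{r} |L(v,i)| \;\le\; \sum_{i=1}^{r} \Delta_G^{i} \;\le\; 2\Delta_G^{r},
\]
where the last inequality uses the geometric series bound valid for $\Delta_G \ge 2$ (the degenerate cases $\Delta_G \in \{0,1\}$ can be handled separately, since then the graph is a matching or an independent set and the claim is trivial).

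The second step is to combine this with the choice of $r$. Since $r \le \log_{\Delta_G}(s/2)$, we have $\Delta_G^{r} \le s/2$, and therefore $|T(v,r)| \le 2 \cdot (s/2) = s = |N(v,s)|$. By Corollary~\ref{cor:1} (applying either case (i) or case (iii) depending on whether the inequality is strict), this cardinality relation forces the set containment $T(v,r) \subseteq N(v,s)$. Consequently $ecc_G(v,s) \ge r = \lfloor \log_{\Delta_G}(s/2) \rfloor$, and taking the minimum over $v$ yields the lemma.

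I do not expect any serious obstacle here: the only subtlety is the slack factor of $2$ in the geometric sum, which is precisely what justifies taking $s/2$ rather than $s$ inside the logarithm, and handling the small edge cases of $\Delta_G \in \{0,1\}$ cleanly.
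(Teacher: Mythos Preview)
Your proposal is correct and follows essentially the same approach as the paper: bound $|T(v,r)|$ by a BFS layer-counting argument to get $|T(v,r)| \le 2\Delta_G^{r}$, then use the cardinality comparison (the paper does this implicitly, you cite Corollary~\ref{cor:1} explicitly) to conclude $T(v,r) \subseteq N(v,s)$ whenever $r \le \log_{\Delta_G}(s/2)$. The only cosmetic difference is that the paper uses the slightly sharper per-layer bound $\Delta_G(\Delta_G-1)^{i-1}$ before relaxing to $2\Delta_G^{r}$, whereas you use $\Delta_G^{i}$ directly; both routes land on the same inequality.
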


\begin{proof} 
For any vertex $v$ and integer $t \geq 1$, $T(v, t)$ cannot include more than $\Delta_G \cdot \sum_{i=0}^{t-1} (\Delta_G-1)^{i}$ vertices. 
Since $\Delta_G, t \geq 1$ we have that $\Delta_G \cdot \sum_{i=0}^{t-1} (\Delta_G-1)^{i} \leq 2 \cdot \Delta_G^t$ and so for any integer $t \geq 1$ such that $2 \cdot \Delta_G^t < n$ it must be that $T(v, t) \subseteq N(v, 2 \cdot \Delta_G^t)$.
By definition, $ecc(v, s)$ is equal to the largest integer $x \in \left[0, ecc(v) \right]$ for which $T(v, x) \subseteq N(v, s)$.
Thus, $t \leq ecc(v, 2 \cdot \Delta_G^t)$. Since $t \leq ecc(v, 2 \cdot \Delta_G^t)$ for any vertex $v$, it follows from the definition of $rad(s) = \min_{v \in V}\{ecc(v, s)\}$ that $t \leq rad(2 \cdot \Delta_G^t)$. 
Setting $s \geq 2 \cdot \Delta_G^t$, or $t \leq \log_{\Delta_G} (s /2)$, it follows that for any integer $t$ such that $t \leq \log_{\Delta_G} (s /2)$ it must be that $t \leq rad(2 \cdot \Delta_G^t) \leq rad(s)$. Thus, $\lfloor \log_{\Delta_G} (s /2) \rfloor \leq rad(s)$.
\end{proof}

Finally, we are ready to prove Theorem~\ref{thm:main_degree}.

\begin{proof}[Proof of \Cref{thm:main_degree}]
By Lemma \ref{lem:5},  $k = \lfloor \log_{\Delta_G} (\Delta_G^k) \rfloor \leq \lfloor \log_{\Delta_G} (c^k n^{1 - k \varepsilon}) \rfloor \leq rad_G(2 c^k n^{1 - k \varepsilon})$. Thus, the ADO from Lemma \ref{tho:main_1} constructed for $G$ using $\alpha = \frac{1 - k \varepsilon}{3}$ and $\cfp  = 2 c^k$ uses $\tilde{O}(c^k n^{2-\frac{k \varepsilon}{3}})$ space and produces a distance estimation that satisfies $d(u,v) \leq \hat{d}(u,v) \leq \max \{d(u,v), \, 2d(u,v) + 1 - rad_G(2 c^k n^{1 - k \varepsilon})\} \leq \max \{d(u,v), \, 2d(u,v) + 1 - k\}$.
\end{proof}

\section{Conditional Lower Bounds}\label{sec: lower bound proof}

\subsection{Conditional Lower Bound on Additive Error Improvement} 
As mentioned in Section~\ref{sec:intro}, proving \Cref{lem:CLB general distinguisher}, which assuming Hypothesis \ref{set intersection hypothesis}, eliminates the possibility of a subquadratic $(k,k+2)$-distinguisher oracle for graphs with $\Delta_G = \Theta(n^{1/k})$, directly implies Theorem~\ref{thm:main lower bound}, since a $(2, 1-k)$-stretch ADO is also a $(k,k+2)$-distinguisher oracle.

\begin{proof}[Proof of \Cref{lem:CLB general distinguisher}.]
Given an instance of Problem~\ref{set intersection problem}, we construct a graph $G$ with $n = \tilde{O}(N)$ vertices and $\Delta_G = O(n^{1/k})$, such that a $(k, k+2)$-distinguisher oracle for $G$ solves the instance of Problem \ref{set intersection problem}.

We begin by focusing on a $k+1$ layered graph $\mathcal{L} = (V_\mathcal{L},E_\mathcal{L})$, which we call the \emph{infrastructure graph}. 
The infrastructure graph has three important properties:
\begin{enumerate*}
    \item each layer contains $N$ vertices, 
    \item there is a path of length $k$ from every vertex in the first layer to every vertex in the last layer, and 
    \item the degree of every vertex is at most $2N^{1/k}$.
\end{enumerate*}

We then construct
for each $x\in X$ a graph $G_x = (V_x,E_x)$,
which is a subgraph of (a copy of) $\mathcal L$, by removing some of the edges between the first (last) and second (second to last) layers of $\mathcal L$ in a way that expresses which sets contain $x$ and which do not. 
Finally, we construct the graph $G$ which is \emph{specialized}  union of all of the graphs $G_x$ for all $x\in X$, and enables solving the instance of Problem~\ref{set intersection problem} by using a $(k,k+2)$-distinguisher oracle on $G$.

\paragraph{The infrastructure graph.}
The infrastructure graph $\mathcal{L}$ is a $k+1$ layered graph where each layer contains $N$ vertices, and each layer of $N$ vertices is locally indexed from $1$ to $N$. The layers are numbered $0$ to $k$.

The edges of $\mathcal{L}$ are defined using the following labels. 
Assign a \emph{label} $\ell(v)$  to every vertex $v$ in $\mathcal L$  which is the  $k$ digit representation in base\footnote{We assume for convenience that $N^{1/k}$ is an integer, since otherwise, one can increase $N$ slightly without affecting the asymptotic complexities.} $N^{1/k}$ of the local index (an integer between $1$ and $N$) of $v$.
Then, for every $1\le t \le k$, connect $u$ from layer $t-1$ with $v$ from layer $t$ if and only if the digits of $\ell(u)$ and the digits of $\ell(v)$ all match, except for possibly the $t$'th digit. 
It is straightforward to observe (since each digit has $N^{1/k}$ options) that the degree of every vertex in $\mathcal{L}$ is $2N^{1/k}$, except for the vertices in the first and last layers which have degree $N^{1/k}$.
The following claim shows that there is a path of length $k$ from every vertex in the first layer and every vertex in the last layer. 

\begin{claim}\label{clm:layers}
    Let $v$ be a vertex in the first layer of $\mathcal L$ and let $u$ be a vertex in the last layer of $\mathcal L$. 
    Then there exists a path of length $k$ from $u$ to $v$ in $\mathcal L$.
\end{claim}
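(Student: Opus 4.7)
The plan is to prove Claim~\ref{clm:layers} by constructing an explicit path of length exactly $k$ between any vertex $v$ in the first layer and any vertex $u$ in the last layer, by sequentially ``fixing'' one digit of the label at a time as we move from layer to layer.

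First I would write the labels explicitly: let $\ell(v) = (d_1, d_2, \ldots, d_k)$ and $\ell(u) = (e_1, e_2, \ldots, e_k)$ in base $N^{1/k}$. For every $0 \le t \le k$, I would define an intermediate vertex $w_t$ in layer $t$ whose label is the ``hybrid''
\[
\ell(w_t) = (e_1, e_2, \ldots, e_t, d_{t+1}, d_{t+2}, \ldots, d_k),
\]
where for $t=0$ the label is $\ell(v)$ (so $w_0 = v$) and for $t=k$ the label is $\ell(u)$ (so $w_k = u$). Note that such a vertex exists in layer $t$ because each layer contains one vertex per label.

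Next, I would verify that $w_{t-1}$ and $w_t$ are adjacent for every $1 \le t \le k$. By construction, $\ell(w_{t-1})$ and $\ell(w_t)$ agree in every coordinate except possibly the $t$-th (where they hold $d_t$ and $e_t$, respectively). Since $w_{t-1}$ lies in layer $t-1$ and $w_t$ lies in layer $t$, the edge rule of $\mathcal{L}$ (which connects vertices of adjacent layers $t-1$ and $t$ whose labels match outside the $t$-th digit) places an edge between them. If $d_t = e_t$ this is still valid, since the edge rule allows equality in the distinguished digit. Thus $v = w_0, w_1, \ldots, w_k = u$ is a walk of length $k$ in $\mathcal{L}$ which is moreover a simple path, because each $w_t$ lies in a distinct layer.

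This gives the required path of length exactly $k$. I do not anticipate a real obstacle here: the entire argument is a direct unpacking of the labeling rule, and the key observation is simply that changing one digit at a time while advancing one layer at a time respects the adjacency condition of $\mathcal{L}$.
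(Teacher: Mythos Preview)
Your argument is correct and is essentially the same as the paper's proof: both define the intermediate vertex $w_t$ in layer $t$ by taking the first $t$ digits from $\ell(u)$ and the remaining $k-t$ digits from $\ell(v)$, and then observe that consecutive $w_{t-1},w_t$ differ only in the $t$-th digit and hence are adjacent by the edge rule of $\mathcal{L}$. Your additional remarks about the case $d_t=e_t$ and about simplicity of the path are fine but inessential.
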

\begin{proof} 

We describe the path of length $k$ between $u$ and $v$.
For any $0 \leq t \leq k$, consider the vertex $w_t$ in layer $t$ of $\mathcal L$  with the label of the following form: the first $t$ digits are the first $t$ digits of $\ell(u)$, and the last $k-t$ digits are the last $k-t$ digits of $\ell(v)$.  
Thus, for $0\le t \le k-1$,  the edge $(w_t,w_{t+1})$ is in $\mathcal L$ since $\ell(w_t)$ and $\ell(w_{t+1})$ are the same, except for possibly  the $(t+1)$-th digit. 
The set of edges which we described form a path of length $k$ between $v$ and $u$.
\end{proof}

\paragraph{Constructing $G_x$.}
We construct $V_x$ by making copies of each vertex in $V_\mathcal L$.
Denote the first layer of $\mathcal L$ by $V_L = \{v_1, \dots, v_N\}$ and the last layer by $V_R = \{u_1, \dots, u_N\}$. 
Let $\hat E_x = \{(v_i,w)| x\notin S_i \wedge (v_i,w)\in E_\mathcal{L}\} \cup \{(u_i,w)| x\notin S_i \wedge (u_i,w)\in E_\mathcal{L}\}$.
Thus, $\hat E_x$ is the set of edges in $\mathcal L$ that touch vertices in the first or last layers whose index corresponds to the index of sets that do not contain $x$.
We construct $E_x$ by making copies of all edges in $E_{\mathcal L}\setminus \hat E_x$.
The reason for removing the edges in $\hat E_x$ is so that vertices   in the first and last layers of $G_x$ whose edges are in $\hat E_x$ are not connected to any other vertex in $G_x$.
Thus, for each $v_i$ ($u_j$) in the first (last) layer of $G_x$,   $x\in S_i$ if and only if there are edges between $v_i$ ($u_j$) and the second (second to last) layer in $G_x$. 
By \Cref{clm:layers}, if $S_i\cap S_j \ne \emptyset$ then there exists a path of length $k$ between $v_i$ and $u_j$, and otherwise, there is no path in $G_x$ between $v_i$ and $u_j$.
Finally, since $G_x$ is a partial copy of $\mathcal{L}$, the maximum degree in $G_x$ is $2N^{1/k}$.

\paragraph{Constructing $G$.}
We construct the $k+1$ layered graph $G$ by performing the following special union of $G_x$ for all $x$:
for $1\le t \le k-1$ the $t$'th layer of $G$ is the union of the $t$'th layer of all of the $G_x$ graphs taken over all $x\in X$.
Thus, each of the $k-1$ inner layers (excluding the first and last layer of $G$) has $|X|N$ vertices.
For the first (last) layer $G$, instead of taking the union of all of the first (last) layers from all of the $G_x$ graphs, we merge them all into one layer of $N$ vertices. 
So the $i$'th vertex in the first (last) layer of $G$ is a vertex obtained by merging the $i$'th vertex in the first (last) layer of every $G_x$, for all $x\in X$.
Thus, the first and last layers of $G_x$ contain $N$ vertices each.
Since the vertices in the first and last layer of $G$ correspond directly to the vertices $V_L$ and $V_R$ in $\mathcal L$, respectively, we treat the first layer of $G$ as $V_L=\{v_1,\ldots v_n\}$ and the last layer of $G$ by $V_R = \{u_1,\ldots, u_N\}$.
Thus, each node in $V_L\cup V_R$ has  maximum degree at most $|X|N^{1/k} = \tilde O(N^{1/k}).$

\paragraph{Answering a set intersection query.}
Notice that for a set intersection query between $S_i$ and $S_j$, if $S_i\cap S_j \neq \emptyset$, then there exists some $x\in S_i\cap S_j$, and since $G$ contains $G_x$ as a subgraph, the distance between $v_i$ and $u_j$ is at most (and actually exactly) $k$.  
On the other hand, if there exists a path $P$ of length $k$ between  $v_i$ and $u_j$, then by the construction of $G$, $P$ must be completely contained within some $G_x$ for some $x\in X$. 
By the construction of $G_x$, and specifically $E_x$, the existence of $P$ in $G_x$ implies that $x\in S_i$ and $x\in S_j$.
So, in such a case $S_i\cap S_j \neq \emptyset$. 

Notice that, since $G$ is a $k+1$ layered graph, any path between a vertex in the first layer and a vertex in the last layer must be of length $k+2q$ for some integer $q\ge 0$.
Thus, to answer a set intersection query, it suffices to establish whether the distance in $G$ between  $v_i$ and $u_j$ is either $k$ or at least $k+2$, which the $(k,k+2)$-distinguisher oracle returns in constant time. 

\paragraph{Analysis.}
We conclude that a $(k, k+2)$-distinguisher oracle for graphs with $n = \tilde{O}(N)$ vertices and maximum degree $\tilde{O}(n^{1/k})$  also solves the instance of Problem \ref{set intersection problem} (of size $N$). 
Thus, according to Hypothesis \ref{set intersection hypothesis}, an ADO for graphs with $n = \tilde{O}(N)$ vertices, for which the maximum degree is $\tilde{O}(n^{1/k})$, must use $\tilde{O}(N^2) = \tilde{O}(n^2)$ space.
We note that the maximum degree can be reduced to $O(n^{1/k})$ by artificially adding $\tilde{O}(n) = \tilde{O}(N)$ isolated vertices to $G$.
\end{proof}

\subsection{Conditional Lower Bound on Multiplicative Error Improvement} 

We now move on to prove \Cref{thm:mult lower bound}, which states that the multiplicative approximation of the ADO from \Cref{thm:main_degree} is optimal under Hypothesis~\ref{set intersection hypothesis}, even when allowing an arbitrarily large constant additive error. We first provide some intuition. Notice that by the way we constructed the graph $G$ in the proof of \Cref{lem:CLB general distinguisher}, a path between representatives $v_i$ and $u_j$ in the case of $S_i \cap S_j = \emptyset$, has to pass through some other set representative vertex in the first or last layer. In order to prove \Cref{thm:mult lower bound}, we want to construct a $(2-\varepsilon, c)$-stretch ADO that can distinguish the case  $S_i \cap S_j = \emptyset$ from the case $S_i \cap S_j \neq \emptyset$. Thus, we want to maximize the ratio between $d(v_i, u_j)$ in the case that $S_i \cap S_j = \emptyset$ to $d(v_i, u_j)$ in the case that $S_i \cap S_j \neq \emptyset$ in $G$. To do so, we split the edges touching the first and last layers of $G$.

\begin{proof}[Proof of \Cref{thm:mult lower bound}.]
We build upon the construction from the proof of \Cref{lem:CLB general distinguisher} but with two changes: 
\begin{enumerate*}
    \item we use $k = \log(n)$ so $\Delta_G = \tilde{O}(N^{1/k}) = \tilde{O}(1)$, and
    \item given constants $\varepsilon, c > 0$ we choose $t=\ceil*{\frac{k+c}{2 \varepsilon}}$ and split each of the edges connecting the first and second (second to last and last) layers of the graph into $t$ edges.
\end{enumerate*}

The number of vertices in the graph is now $O(2N^{1+1/k} + (k-1)|X|N + 2(t-1)|X|N^{1+1/k}) = \tilde{O}(N)$, the number of edges is $O((k-2)|X|N^{1+1/k} + 2t|X|N^{1+1/k}) = \tilde{O}(N)$ and the maximum degree is $O(|X|N^{1/k}) = \tilde{O}(1)$.

For a set intersection query between $S_i$ and $S_j$, if $S_i\cap S_j \neq \emptyset$, then the distance between $v_i$ and $u_j$ is now $2t + k - 2$. On the other hand, if $S_i\cap S_j = \emptyset$, the distance must be $\geq 4t + k - 2$. In the case of $S_i\cap S_j \neq \emptyset$ where the distance is $2t + k - 2$, a $(2 - \varepsilon, c)$-stretch ADO would not return a distance larger than $(2 - \varepsilon)(2t + k - 2) + c$ which by our choice of $t$ is strictly smaller than $4t + k - 2$, which is the smallest possible distance in the case of $S_i\cap S_j = \emptyset$. Thus, a  $(2 - \varepsilon, c)$-stretch ADO could also solve Problem \ref{set intersection problem} (of size $N$) and so assuming Hypothesis~\ref{set intersection hypothesis} such an ADO requires $\tilde{\Omega}(n^2)$.
\end{proof}

\section{Conclusion}\label{sec:conclusions}
In this paper we provide an algorithm (\Cref{thm:main_degree}) and tight conditional lower bounds (\Cref{thm:main lower bound} and \Cref{thm:mult lower bound}) for subquadratic space ADOs as a function of the maximum degree. We show that for graphs with maximum degree $\Delta_G \leq O(n^{1/k-\varepsilon})$, it is possible to construct a subquadratic space ADO with stretch $(2,1-k)$. Furthermore, we show that under Hypothesis~\ref{set intersection hypothesis}, it is impossible to improve the additive approximation of our ADO, nor is it possible to improve the multiplicative approximation to $2-\varepsilon$, even when allowing an arbitrarily large constant additive error.

\bibliographystyle{plainurl}
\bibliography{ado}

\end{document}